\documentclass[english]{article}
\usepackage[T1]{fontenc}
\usepackage[latin9]{inputenc}
\PassOptionsToPackage{natbib=true}{biblatex}
\usepackage{xcolor}
\usepackage{pdfcolmk}
\usepackage{array}
\usepackage{rotating}
\usepackage{booktabs}
\usepackage{mathtools}
\usepackage{amsmath}
\usepackage{amsthm}
\usepackage{amssymb}
\usepackage{graphicx}
\PassOptionsToPackage{normalem}{ulem}
\usepackage{ulem}

\makeatletter

\providecommand{\tabularnewline}{\\}
\providecolor{lyxadded}{rgb}{0,0,1}
\providecolor{lyxdeleted}{rgb}{1,0,0}

\DeclareRobustCommand{\lyxsout}[1]{\ifx\\#1\else\sout{#1}\fi}

\usepackage{thmtools}
\usepackage{thm-restate}






\@ifundefined{showcaptionsetup}{}{%
 \PassOptionsToPackage{caption=false}{subfig}}
\usepackage{subfig}
\makeatother

\usepackage{babel}
\usepackage[style=authoryear]{biblatex}
\addbibresource{tuning-energy.bib}
\begin{document}
\global\long\def\P{\mathbf{P}}%
\global\long\def\E{\mathbf{E}}%
\global\long\def\Var{\mathrm{Var}}%
\global\long\def\transpose{^{\mathsf{T}}}%
\global\long\def\mc#1{\mathcal{#1}}%
\global\long\def\diag{\mathrm{diag}}%

\title{Optimal Multivariate Tuning with Neuron-Level and Population-Level
Energy Constraints}
\author{Yuval Harel, Ron Meir\\
Department of Electrical Engineering,\\
Technion -- Israel Institute of Technology, Haifa, Israel}
\maketitle
\begin{abstract}
Optimality principles have been useful in explaining many aspects
of biological systems. In the context of neural encoding in sensory
areas, optimality is naturally formulated in a Bayesian setting, as
neural tuning which minimizes mean decoding error. Many works optimize
Fisher information, which approximates the Minimum Mean Square Error
(MMSE) of the optimal decoder for long encoding time, but may be misleading
for short encoding times. We study MMSE-optimal neural encoding of
a multivariate stimulus by uniform populations of spiking neurons,
under firing rate constraints for each neuron as well as for the entire
population. We show that the population-level constraint is essential
for the formulation of a well-posed problem having finite optimal
tuning widths, and optimal tuning aligns with the principal components
of the prior distribution. Numerical evaluation of the two-dimensional
case shows that encoding only the dimension with higher variance is
optimal for short encoding times. We also compare direct MMSE optimization
to optimization of several proxies to MMSE, namely Fisher information,
Maximum Likelihood estimation error, and the Bayesian Cramér-Rao bound.
We find that optimization of these measures yield qualitatively misleading
results regarding MMSE-optimal tuning and its dependence on encoding
time and energy constraints.
\end{abstract}

\section{Introduction}

Optimality principles have been a useful tool in explaining many aspects
of biological systems, including motor behavior \citep{Todorov2004Optimality},
perception \citep{Gardner2019}, and neural activity \citep{Berkes2011Spontaneous}.
In the context of neural encoding in sensory areas, optimality is
naturally formulated in a Bayesian setting, by specifying \emph{(i)}
a prior distribution of encoded stimuli; \emph{(ii)} a performance
criterion, such as mean decoding error or motor performance; \emph{(iii)}
optimization constraints, such as constraints on energy consumption.
Empirically, neural encoding has been found to depend on the statistics
of natural stimuli \citep{Berkes2011Spontaneous,HarMcAlp04}, as well
as to rapidly adapt to the statistics of recently presented stimuli
\citep{Dean2008Rapid,Benucci2009}, indicating the importance of the
prior distribution. Sensory adaptation also occurs following motor
learning \citep{Darainy2018}, suggesting the relevance of motor performance
criteria, though the effect of motor learning on sensory perception
is quite small. Since sensory encoding occurs in the context of an
acting organism, an optimality criterion should ideally take the entire
sensory-motor loop into account. However, to make the analysis tractable,
most existing work has focused on task-independent measures of encoding
performance, such as decoding error.

A natural optimality criterion for neural encoding is the Mean Square
Error (MSE) of a subsequent optimal decoder, namely the Minimal Mean
Square Error (MMSE) estimator. Many works optimize Fisher information,
which serves as a proxy to the MSE (see \citep{Pilarski2015} for
a thorough review). Specifically, the Cramér-Rao Bound (CRB) states
that the MSE of \emph{unbiased} estimators is lower-bounded by the
inverse of Fisher information. More importantly, under appropriate
regularity conditions, the optimal MSE approaches the expected value
of the CRB \citep{VanTrees2004Detection} in the asymptotic regime
of low noise, corresponding to large decoding time or large population
firing rate. Similar asymptotic relations can be derived between Fisher
information and general $L_{p}$ estimation errors \citep{WanStoLee2016}.
Fisher information of neural spiking activity is easy to compute analytically,
at least in the case of static stimulus \citep[section 3.3]{DayAbb05},
and it can be used without characterizing the decoder. However, optimizing
Fisher information may yield misleading qualitative results regarding
the MSE-optimal encoding outside of the asymptotic regime of large
decoding time \citep{Bethge2002,YaeMei10,Pilarski2015}. Ideally,
a model of biological encoding as optimal encoding should rely on
biologically relevant optimality criteria such as decoding error,
rather than a proxy to decoding error such as Fisher information.
In particular, outside of the asymptotic regime, the CRB does not
provide adequate justification for the use of Fisher information,
since optimal decoding is typically biased, so the CRB does not apply
to the optimal decoder.

Several previous works have attempted direct minimization of decoding
MSE rather than Fisher information. \citep{bethge_optimal_2003} and
two sequel articles \citep{bethge_binary_2003,bethge_second_2003}
analytically derive decoding MMSE for encoding a static scalar state
uniformly distributed on $\left[0,1\right]$ by a single neuron, for
a wide class of piecewise-power-law tuning functions. They find a
phase transition between binary encoding, which is optimal for short
coding times, and analog encoding, which is optimal for long coding
times. In \citep{YaeMei10}, an explicit expression for the MSE of
the optimal Bayesian decoder is derived for a static state encoded
by a uniform population of Gaussian neurons, and is used to characterize
optimal tuning function width and its relation to coding time in the
encoding of scalar stimuli. More recently, \citep{WanStoLee2016}
studied $L_{p}$-based loss measures in the asymptotic regime using
Fisher information, as well as numerically for the maximum a-posteriori
estimator outside the asymptotic regime. \citet{Finkelstein2018}
study the MSE of the ML estimator in the context of two-dimensional
encoding of angles, used as a model for head direction encoding in
bats. In \citep{SusMeiOpp11,SusMeiOpp13}, a mean-field approximation
is suggested to allow efficient evaluation of the MSE in a dynamic
setting. We have previously used Assumed Density Filtering for the
approximate evaluation of decoding MMSE for non-uniform populations
and dynamic state in \citep{Harel2018}.

As far as we are aware, most previous works studying optimal neural
encoding have paid little attention to the topic of energy constraints
or other optimization constraints. Most commonly, optimization is
performed over tuning width while keeping the maximal firing rate
of each neuron fixed. In some cases (such as \citep{WanStoLee2016}),
this is explicitly stated as a constraint on each neuron's firing
rate. However, even when the maximal firing rate of each neuron is
constrained, tuning width affects the mean total firing rate -- and
therefore energy consumption -- of the entire neural population.
This issue has been addressed heuristically in \citep{Zhang1999}
by studying Fisher information per spike as a measure of energy efficiency.
However, the interpretation or significance of this ratio remains
unclear. \citep{Bethge2002} have addressed this issue more systematically
in the context of a univariate state taking values in a bounded interval,
with a constraint on the mean firing rate as well as maximal firing
rate of each neuron (see section 7 therein). MMSE was evaluated using
Monte-Carlo methods, showing that imposing the energy constraint leads
to narrower tuning. \citet{Ganguli2011} also study optimal encoding
with a parameterized population of neurons under a population rate
constraint. However their analysis is based on optimization of Fisher
information.

In this work, we focus on optimization of tuning widths in a multivariate
setting, with constraints on the firing rate of each neuron as well
as the entire population. To allow exact closed-form evaluation of
decoding MMSE, we focus on uniform Gaussian encoding of a static stimulus.
Optimization of tuning width involves a trade-off between population
firing rate, which is maximized by wide tuning, and selectivity of
individual neurons, which is maximized by narrow tuning \citep{Zhang1999,Eurich2000,Sun2017}.
Estimation MSE in a uniform Gaussian population is minimized for a
finite nonzero width in the univariate case \citep{YaeMei10}, and
for infinitely wide tuning in the multivariate case, as shown in section
\ref{sec:Optimal-Encoding}. The addition of a constraint on the population
firing rate gives rise to a finite-width solution in the multivariate
case, in which both constraints are active. The constraint on the
population firing rate may be interpreted as a sparsity constraint:
for a fixed maximal firing rate per neuron, a smaller population rate
constraint means less neurons may fire near their maximal rate for
each stimulus. Optimal tuning depends on both energy constraints,
as well as encoding time.

Applying the predictions of Bayesian MMSE-optimal tuning to explain
experimentally observed coding is challenging. A substantial difficulty
is the characterization of the relevant prior distribution that encoding
is adapted to. In a few special cases, this prior may be derived from
simple physical considerations (e.g. \citet{HarMcAlp04}), but such
an approach is not usually applicable. Another difficulty is in the
evaluation of decoding MMSE outside of the special case of Gaussian
uniform coding, or conversely, identifying biological circumstances
where the Gaussian uniform assumption is reasonable. Some biological
implications of MMSE-optimality in the scalar case have been related
to experimental results in \citep{YaeMei10}. In the multivariate
case, there are fewer relevant experimental results where multivariate
tuning is measured and the prior distribution and encoding time may
be determined. Possibly relevant results are in the encoding of head
direction in bats studied in \citep{Finkelstein2018}. However, it
is not clear that the assumption of Gaussian uniform coding is justified
in this case, as briefly discussed in section \ref{subsec:Two-dimensional-uniform-Gaussian}.

\textbf{Main results}: \emph{(i) }Closed-form characterization of
MMSE in multivariate uniform Gaussian encoding, as well as simple
lower and upper bounds. \emph{(ii)} A population-level constraint
is essential to formulate a well-posed optimization problem in a multivariate
setting, and optimal tuning depends on this constraint in a non-trivial
way.\emph{ (iii) }MMSE-optimal multivariate Gaussian tuning functions
are aligned with the principal components of the prior distribution
and tuning is narrower in dimensions with larger prior variance. \emph{(iv)}
The trade-off between firing rate and neuron selectivity shifts with
stimulus dimensionality, so that MSE-optimal multivariate encoding
always involves maximizing population firing rate, in contrast to
the univariate case. \emph{(v)} Optimization of proxies to MMSE such
as Fisher information or MSE of the Maximum Likelihood estimator yield
qualitatively misleading results. \emph{(vi)} In a two-dimensional
setting, one-dimensional encoding is preferred for short decoding
times, in contrast to previous predictions.

\section{Setting and notation}

\subsection{Optimal encoding and decoding}

We consider the problem of optimal encoding and decoding of a static
external state, described by a random $m$-dimensional variable $X\in\mathbb{R}^{m}$.
The state is observed through a population of sensory neurons over
a time interval $[0,T]$. Given the state, neurons fire independently,
with the $i$th neuron firing as a Poisson process with rate $\lambda^{\left(i\right)}\left(X\right)$.
We assume the set $\{\lambda^{(i)}\}_{i=1}^{M}$ belongs to some parameterized
family with parameter $\phi$, and denote the number of spikes from
the $i$th neuron up to time $T$ by $N_{T}^{\left(i\right)}$ and
the spike counts from all neurons by $\boldsymbol{N}_{T}\coloneqq(N_{T}^{\left(i\right)})_{i=1}^{M}$.
In this context, e\emph{ncoding }refers to the choice of functions
$\lambda^{\left(i\right)}$, while \emph{decoding} means estimating
the state $X$ from the spike counts\footnote{More generally, we could estimate the state $X$ from the exact spike
pattern rather than just spike count. However, since the estimated
state is static, i.e., does not vary through the observation interval
$\left[0,T\right]$, and the firing is Poisson, spike counts comprise
a sufficient static for this estimation problem, simplifying the
analysis.} $\boldsymbol{N}_{T}$.

Given an estimator $\hat{X}=\hat{X}\left(\boldsymbol{N}_{T}\right)$,
we define the Mean Square Error (MSE) as 
\begin{equation}
\epsilon(\hat{X})\coloneqq\mathrm{tr}[(X-\hat{X})(X-\hat{X})\transpose]\label{eq:mse}
\end{equation}
where $\mathrm{tr}$ is the trace operator. We seek an estimator (decoder)
$\hat{X}$ and tuning parameters (encoder) $\phi$ that solve 
\[
\min_{\phi}\min_{\hat{X}}\mathrm{E}\left[\epsilon\right]=\min_{\phi}\mathrm{E}[\min_{\hat{X}}\mathrm{E}[\epsilon|\boldsymbol{N}_{T}]].
\]
The inner minimization problem in this equation is solved by the MSE-optimal
decoder, which is the posterior mean $\hat{X}=\mu_{T}\coloneqq\mathrm{E}\left[X|\boldsymbol{N}_{T}\right]$.
We denote the error of this estimator by $\epsilon_{\mathrm{MMSE}}\coloneqq\epsilon\left(\mu_{T}\right)$
(for Minimum Mean Square Error). The outer minimization problem becomes
$\min_{\phi}\E\left[\epsilon_{\mathrm{MMSE}}\right]$; its solution
is the optimal encoder.

We also compare MSE-optimal encoding and decoding to three alternative
performance criteria.
\begin{itemize}
\item Maximization of Fisher information,
\item Minimization of the Bayesian Cramér-Rao bound,
\item Optimization of MSE (\ref{eq:mse}) for the Maximum Likelihood (ML)
estimator.
\end{itemize}
These performance criteria are defined and discussed in section \ref{subsec:Alternate-optimality-criteria}.

\subsection{Uniform-Gaussian coding\label{subsec:Uniform-Gaussian-coding}}

For mathematical tractability, we focus on the case of homogeneous
Gaussian tuning, which is a special case of that studied in \citep{Harel2018}:
the firing rate of the $i$th neuron in response to state $x$ is
given by 
\begin{equation}
\lambda^{i}\left(x\right)=h_{\mathrm{d}}\exp\left(-\frac{1}{2}\left\Vert x-\theta^{\left(i\right)}\right\Vert _{R}^{2}\right),\label{eq:gauss-tc-discrete}
\end{equation}
where $\theta^{\left(i\right)}\in\mathbb{R}^{m}$ is the $i$th neuron's
preferred stimulus, $h_{\mathrm{d}}\in\mathbb{R}_{+}$ is each neuron's
maximal expected firing rate, $R\in\mathbb{R}^{m\times m}$ is a symmetric
positive-definite matrix, and the notation $\left\Vert y\right\Vert _{M}^{2}$
denotes $y^{T}My$.  The distribution of the external state $X$
is assumed to be Gaussian, 
\begin{equation}
X\sim\mathcal{N}\left(\mu_{0},\Sigma_{0}\right).\label{eq:prior}
\end{equation}
\begin{figure}
\subfloat[uniform coding]{\includegraphics[width=0.5\columnwidth]{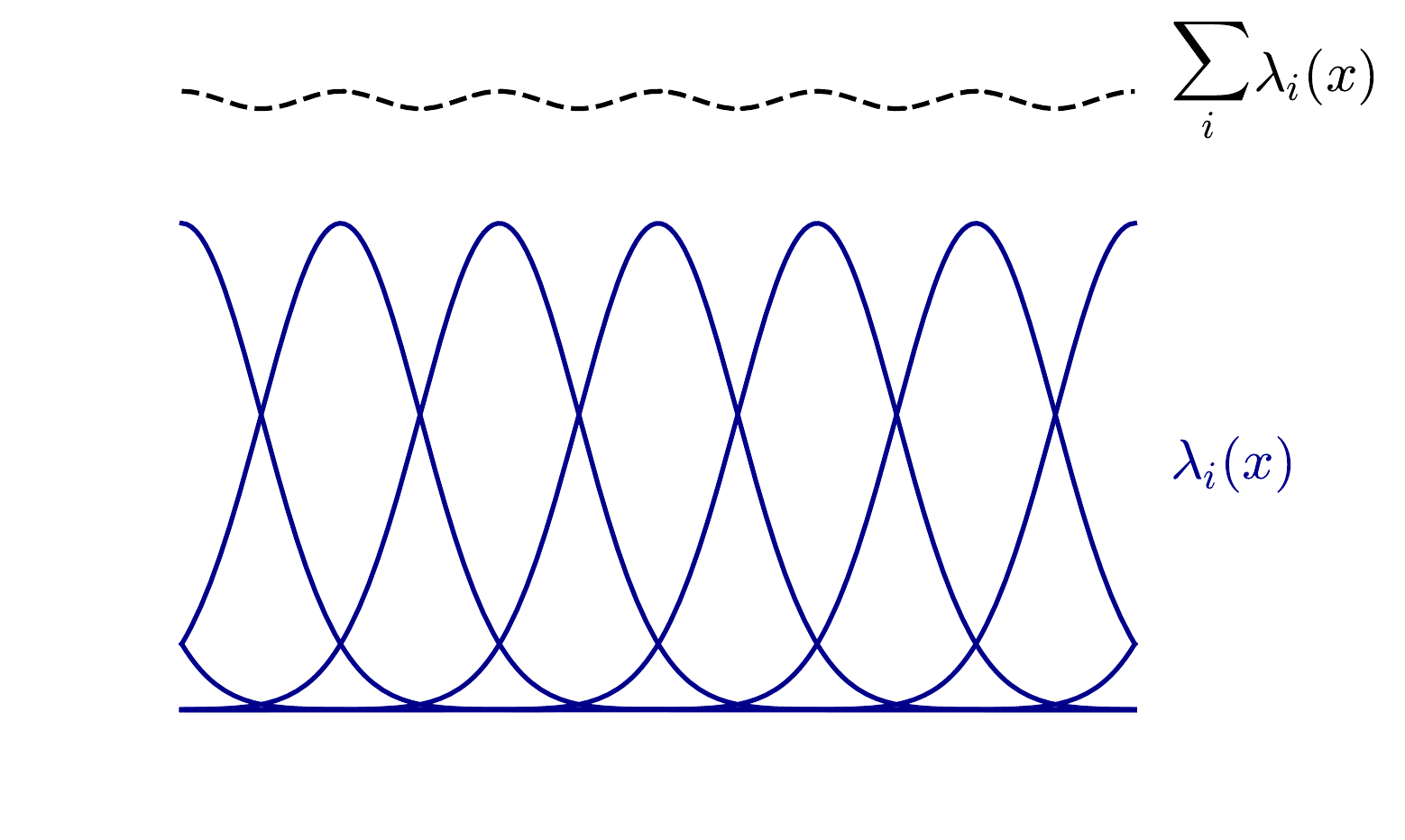}}\subfloat[non-uniform coding]{\includegraphics[width=0.5\columnwidth]{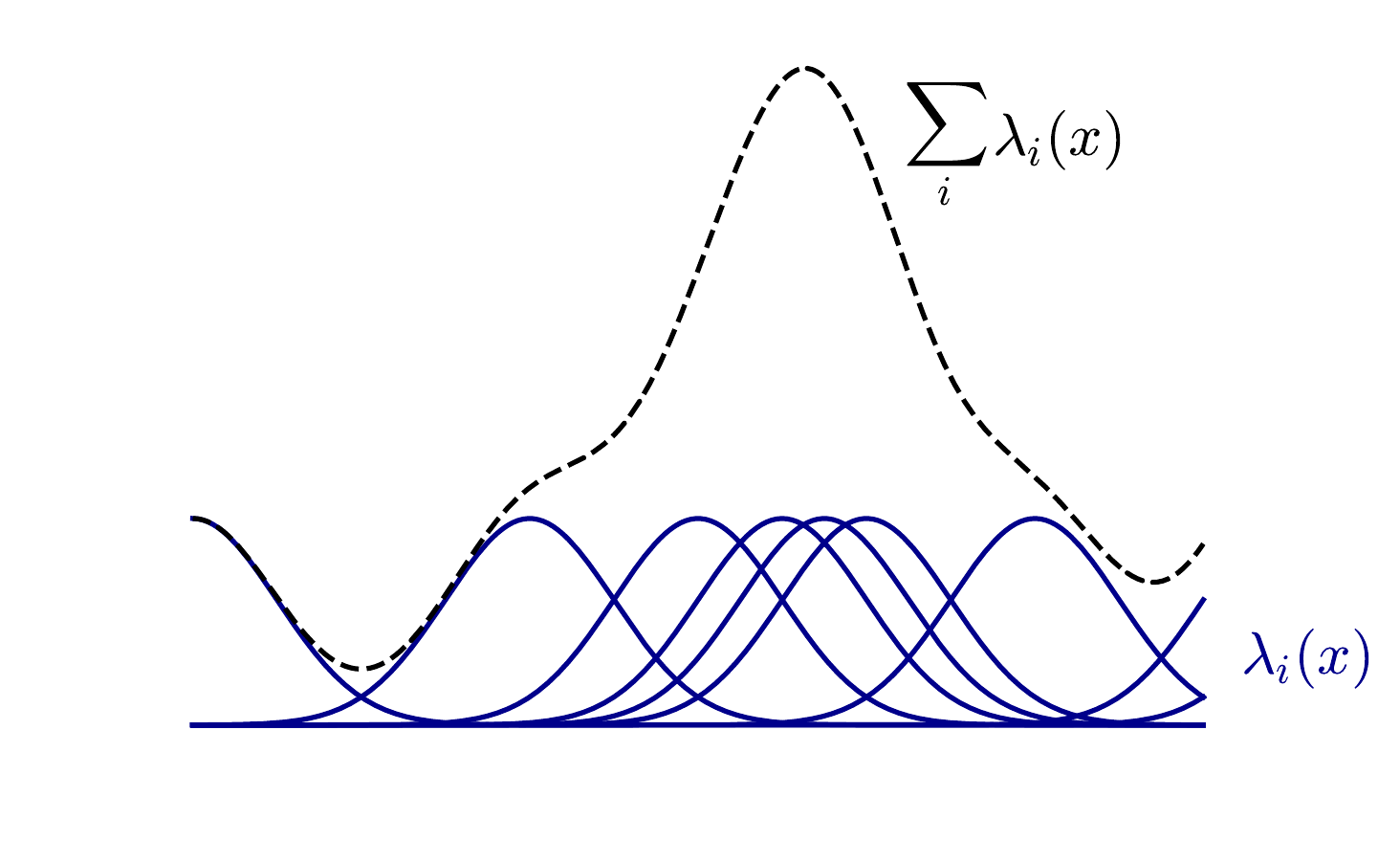}}\caption{The uniform coding property, $\Sigma_{i}\lambda^{i}\left(x\right)=\mathrm{const}$,
holds approximately in homogeneous populations with tuning function
centers located on a dense uniform grid, as demonstrated in (a) in
a 1-dimensional case. (b) illustrates non-uniform coding. Reprinted
from \citep{Harel2018}.}
\label{uniform}
\end{figure}

We assume preferred stimuli $\theta^{\left(i\right)}$ are spaced
uniformly on an infinite grid with spacing $\Delta\theta$, so there
is one neuron for each preferred stimulus of the form $\left(\alpha_{1},\ldots,\alpha_{m}\right)\transpose\Delta\theta$
where each $\alpha_{i}$ is an integer. When $\Delta\theta$ is reasonably
small relative to the eigenvalues of $R^{-1/2}$, the total firing
rate $\sum_{i}\lambda\left(x\right)$ is approximately independent
of $x$, a property that we refer to as \emph{uniform coding} (see
Figure \ref{uniform} and discussion in \citep{Harel2018}). For small
$\Delta\theta$, we may approximate the neural populations by an infinite
continuous population where the preferred stimulus $\theta$ may take
any value in $\mathbb{R}^{m}$ (see Figure \ref{continuous-pop}).
In this continuous population model, each spike is characterized by
the preferred stimulus of the firing neuron -- which is a continuous
variable -- rather than by the neuron's index. Thus, the firing pattern
is described by a \emph{marked point process} (MPP) $N$, which is
a random sequence of pairs $\left(t_{k},\theta_{k}\right)$, where
$t_{k}\in[0,\infty)$ is the time of the $k$th spike and $\theta_{k}\in\mathbb{R}^{m}$
its \emph{mark}, which is the preferred stimulus of the spiking neuron.
A marked point process is characterized by the rate of points with
marks in each (measurable) subset $\Theta\subseteq\mathbb{R}^{m}$,
which in the discrete population model (\ref{eq:gauss-tc-discrete})
is given by
\begin{align*}
\lambda\left(x;\Theta\right) & =\sum_{i:\theta_{i}\in\Theta}\lambda^{i}\left(x\right)=h_{\mathrm{d}}\sum_{i:\theta^{\left(i\right)}\in\Theta}\exp\left(-\frac{1}{2}\left\Vert x-\theta^{\left(i\right)}\right\Vert _{R}^{2}\right)\\
 & \approx\frac{h_{\mathrm{d}}}{\Delta\theta^{m}}\int_{\Theta}\exp\left(-\frac{1}{2}\left\Vert x-\theta\right\Vert _{R}^{2}\right)d\theta
\end{align*}
The approximate equality in the second line is exact in the limit
$\Delta\theta\to0$ with $h\coloneqq h_{\mathrm{d}}\Delta\theta^{-m}$
fixed. Accordingly, in the continuous population model we take the
MPP $N$ to have the space-time density 
\begin{equation}
\lambda\left(x;\theta\right)\coloneqq h\exp\left(-\frac{1}{2}\left\Vert x-\theta\right\Vert _{R}^{2}\right),\label{eq:gauss-tc}
\end{equation}
meaning that the rate of points with marks in $\Theta\subseteq\mathbb{R}^{m}$
is $\int_{\Theta}\lambda\left(x;\theta\right)d\theta$. In the continuous
population model, we use the notation $\boldsymbol{N}_{t}$ to refer
to the sequence of spike times and marks up to time $t$, that is
$\boldsymbol{N}_{t}\coloneqq\left(t_{k},\theta_{k}\right)_{k=1}^{N_{t}}$.
The probability density\footnote{$\boldsymbol{N}_{T}$ is a vector of random length $N_{T}$, and (\ref{eq:likelihood})
is its density for each value of $N_{T}$: when $N_{T}=n>0$, integrating
(\ref{eq:likelihood}) over the variables $\left\{ t_{k},\theta_{k}\right\} _{k=1}^{n}$
where $0\leq t_{1}\leq\cdots\leq t_{n}\leq T$ yields $\mathrm{P}\left(N_{T}=n\right)=e^{-rT}\left(rT\right)^{n}/n!$.
In the case $N_{T}=0$, the right-hand side of (\ref{eq:likelihood})
is itself the probability $\mathrm{P}\left(N_{T}=0\right)$, which
may be viewed as a density over the 0-dimensional space of possible
$\boldsymbol{N}_{T}$ values.} of $\boldsymbol{N}_{T}$ given $X=x$ is 
\begin{align}
p\left(\left(t_{k},\theta_{k}\right)_{k=1}^{N_{T}}|X=x\right) & =e^{-rT}\prod_{k=1}^{N_{T}}\lambda\left(x;\theta_{k}\right)\quad\left(t_{1}<\cdots<t_{N_{T}}\right)\label{eq:likelihood}
\end{align}
where 
\begin{equation}
r\coloneqq\int_{\mathbb{R}^{m}}\lambda\left(x;\theta\right)d\theta=h\sqrt{\frac{\left(2\pi\right)^{m}}{\det R}}\label{eq:total-rate}
\end{equation}
is the \emph{total population rate}. The likelihood (\ref{eq:likelihood})
does not depend on spike times $t_{k}$, only on preferred stimuli
$\theta_{k}$, so the sequence of preferred stimuli comprises a sufficient
statistic for the estimation of $X$.

\begin{figure}
\includegraphics[width=1\columnwidth]{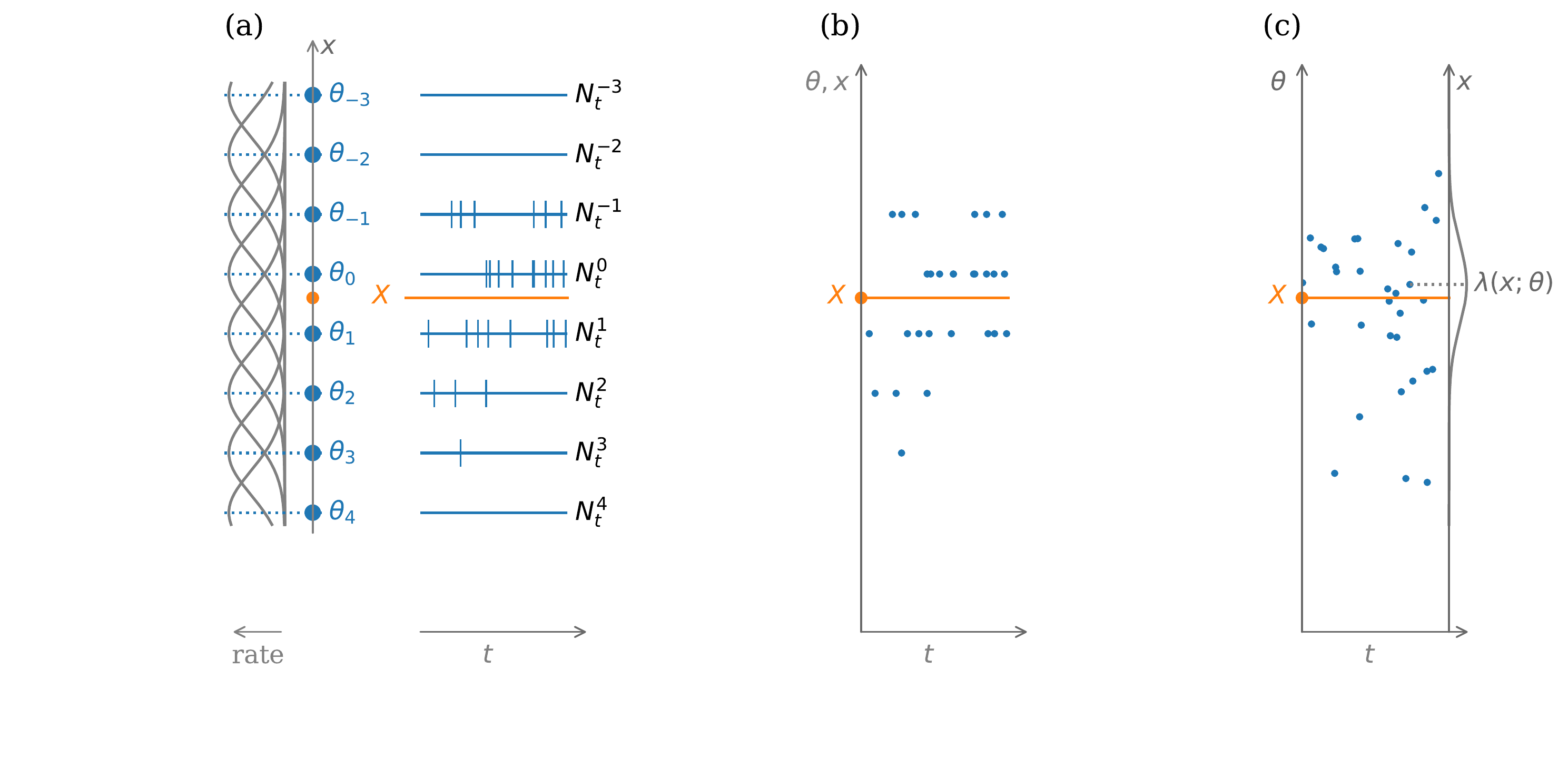}\caption{Discrete and continuous population models. \textbf{(a)} A discrete
population of neurons with preferred stimuli $\theta_{i}$ on a uniform
grid. The tuning functions are depicted on the left. The spiking pattern
of each neuron in response to the stimulus $X$ (orange line) is represented
by a separate point processes $N_{t}^{i}$ (blue vertical ticks).
\textbf{(b)} A representation of the same neural response as a single
marked point process. Each blue dot represents a spike with the vertical
location representing its mark $\theta$. \textbf{(c)} A continuous
population model approximating the statistics of the same population.
The approximation would be better for larger populations. The tuning
function corresponding to one of the spikes is depicted on the right.
Adapted from \citep{Harel2018}.}

\label{continuous-pop}
\end{figure}

\subsection{Energy constraints}

Clearly, the solution to the optimal encoding problem depends crucially
on the family of encodings over which optimization is performed. In
particular, if optimization is entirely unconstrained, the MMSE can
always be decreased by scaling the maximal rate-density $h$ in (\ref{eq:gauss-tc})
by some constant $k>1$, thereby uniformly increasing all firing rates.
The unconstrained problem is therefore mathematically ill-posed. However,
firing rates of real neurons cannot grow without bound. Although optimization
constraints are rarely discussed explicitly, all works mentioned above
-- apart from \citep{Bethge2002} -- implicitly constrain the neural
population in the same way: by fixing the maximal firing rate $h$
(or $h_{\mathrm{d}}$ in a discrete model) across all neurons and
optimizing only preferred stimuli and/or tuning widths. Note that
the continuous model approximates a discrete population with $h=h_{\mathrm{d}}/\Delta\theta^{m}$,
so that constraining the rate-density $h$ translates to a constraint
on maximal firing rates $h_{\mathrm{d}}$ of individual neurons.

We propose that in addition to a constraint on maximal rate-density
$h$, the \emph{expected} \emph{total }firing rate of the population
should also be constrained when formulating the optimal encoding problem.
Biologically, such a constraint corresponds to a constraint on the
expected rate of energy use by the neural population \citep{Zhang1999}.
As demonstrated below, this constraint may be crucial to make the
problem mathematically well-posed, even in the presence of a neuron-level
constraint on $h$. The energy-constrained problem takes the form
\begin{align}
\min_{h,R}\; & \epsilon_{\mathrm{MMSE}}=\epsilon\left(\mu_{T}\right)=\E\left[\mathrm{\mathrm{tr}}[(X-\mu_{T})(X-\mu_{T})\transpose]\right]\nonumber \\
\mathrm{s.t.}\; & h\leq\bar{h}\nonumber \\
 & r=h\sqrt{\frac{\left(2\pi\right)^{m}}{\det R}}\leq\bar{r}\label{eq:opt}
\end{align}
where $\mu_{T}\coloneqq\mathrm{E}\left[X|\boldsymbol{N}_{T}\right]$.

The population firing rate constraint may also be interpreted as a
sparsity constraint: for a fixed neuron-level rate constraint $\bar{h}$,
higher population rate $r$ is related to wider tuning resulting in
more neurons firing in response to the same stimulus. Low values of
the population constraint $\bar{r}$ thus enforce sparse coding, where
only a few neurons fire in response to each stimulus value.

\section{Closed-form computation of estimation error}

\subsection{General multivariate case}

The posterior distribution for the model (\ref{eq:prior})-(\ref{eq:likelihood})
is Gaussian, 
\[
X|\boldsymbol{N}_{T}\sim\mathcal{N}\left(\mu_{T},\Sigma_{T}\right)
\]
where

\begin{align}
\mu_{T} & =\Sigma_{T}\left(\Sigma_{0}^{-1}\mu_{0}+R\sum_{i=1}^{N_{T}}\theta_{i}\right),\nonumber \\
\Sigma_{T} & =\left(\Sigma_{0}^{-1}+N_{T}R\right)^{-1}.\label{eq:posterior}
\end{align}
This result is a special case of \citep[Theorem 1]{Snyder1977}. The
estimation MMSE is therefore
\begin{align*}
\epsilon_{\mathrm{MMSE}} & =\mathrm{\mathrm{tr}}\E\left[(X-\mu_{T})(X-\mu_{T})\transpose\right]=\mathrm{\mathrm{tr}}\E\E\left[(X-\mu_{T})(X-\mu_{T})\transpose|\boldsymbol{N}_{T}\right]\\
 & =\mathrm{tr}\,\E[\Sigma_{T}]=\mathrm{tr}\,\E\left[\left(\Sigma_{0}^{-1}+N_{T}R\right)^{-1}\right],
\end{align*}
where 
\[
N_{T}\sim\mathrm{Pois}\left(\sqrt{\frac{\left(2\pi\right)^{m}}{\det R}}hT\right)
\]
(see (\ref{eq:total-rate})), yielding the closed-form expression
\begin{equation}
\epsilon_{\mathrm{MMSE}}=\exp\left(-\sqrt{\frac{\left(2\pi\right)^{m}}{\det R}}hT\right)\sum_{k=0}^{\infty}\frac{1}{k!}\left(\sqrt{\frac{\left(2\pi\right)^{m}}{\det R}}hT\right)^{k}\mathrm{tr}\left((kR+\Sigma_{0}^{-1})^{-1}\right).\label{eq:mmse-uniform-gaussian}
\end{equation}

\subsection{Diagonal tuning}

An important special case of (\ref{eq:mmse-uniform-gaussian}) is
where the prior variance $\Sigma_{0}$ and tuning precision $R$ are
both diagonal, $m=n$, and $H=I$, with 
\begin{align}
\Sigma_{0} & =\diag\left(\sigma_{0,1}^{2},\sigma_{0,2}^{2},\ldots\sigma_{0,m}^{2}\right),\label{eq:diagonal-prior}\\
R & =\diag\left(\alpha_{1}^{-2},\ldots\alpha_{m}^{-2}\right).\label{eq:diagonal}
\end{align}
In section \ref{sec:Optimal-Encoding} we show that the general optimization
problem (\ref{eq:opt}), (\ref{eq:mmse-uniform-gaussian}) may be
reduced to this case, so there is no loss of generality here. In this
diagonal case, the total population rate is 
\begin{equation}
r=\left(2\pi\right)^{m/2}h\prod_{i}\alpha_{i},\label{eq:total-rate-diagonal}
\end{equation}
and (\ref{eq:mmse-uniform-gaussian}) reads
\begin{align}
\epsilon_{\mathrm{MMSE}} & =\exp\left(-h'T\prod_{j}\alpha_{j}\right)\sum_{k=0}^{\infty}\frac{(h'T\prod_{j}\alpha_{j})^{k}}{k!}\sum_{i=1}^{m}\frac{1}{\sigma_{0,i}^{-2}+k\alpha_{i}^{-2}},\nonumber \\
 & =\sum_{i=1}^{m}\sigma_{0,i}^{2}\,q\left(\frac{\alpha_{i}^{2}}{\sigma_{0,i}^{2}},h'T\prod_{j}\alpha_{j}\right),\label{eq:mmse-uniform-gaussian-diag}
\end{align}
where $h'\coloneqq\left(2\pi\right)^{m/2}h$ and we define, for $s,r>0$,
\begin{equation}
q\left(s,r\right)\coloneqq se^{-r}\sum_{k=0}^{\infty}\frac{r^{k}}{k!\left(s+k\right)}=M\left(1,s+1,-r\right),\label{eq:q}
\end{equation}
where $M$ is Kummer's confluent hypergeometric function \citep{andrews2000special}.
This result for the diagonal case has previously appeared in \citep{YaeMei10}.

In particular, the estimation error in the $i$th direction is given
by the $i$th summand in (\ref{eq:mmse-uniform-gaussian-diag}), namely
$\sigma_{0,i}^{2}\,q(\alpha_{i}^{2}/\sigma_{0,i}^{2},h'T\prod_{j}\alpha_{j})$.
The function $q$ takes values in the interval $\left[0,1\right]$,
so this factor represents the reduction of prior variance due to observation
of the spiking activity. More specifically, $q\left(s,r\right)$ satisfies
the bounds
\begin{equation}
\left(1+\frac{r}{s}\right)^{-1}\leq q\left(s,r\right)\leq\left(1+\frac{r}{s+1}\right)^{-1}.\label{eq:q-bounds}
\end{equation}
The lower bound is obtained by applying Jensen's inequality to the
convex function $k\mapsto s/\left(s+k\right)$ in (\ref{eq:q}). The
upper bound may be obtained from \citep[Theorem 15]{Luke1972inequalities}
by taking the limit $a\to1$ with $c=a+1$ in equation (5.4) therein.
This yields the following bounds for the MMSE

\begin{equation}
\sum_{i=1}^{m}\left(\frac{1}{\sigma_{0,i}^{2}}+\frac{h'T\prod_{j}\alpha_{j}}{\alpha_{i}^{2}}\right)^{-1}\leq\epsilon_{\mathrm{MMSE}}\leq\sum_{i=1}^{m}\left(\frac{1}{\sigma_{0,i}^{2}}+\frac{h'T\prod_{j}\alpha_{j}}{\alpha_{i}^{2}+\sigma_{0,i}^{2}}\right)^{-1}\label{eq:mmse-bound}
\end{equation}

We also use the following property of the function $q$, which is
proved by a direct calculation of the derivative, as outlined in appendix
\ref{sec:proofs}.

\begin{restatable}{clm}{qdecreasing}  

\label{claim:q-decreasing}For any $c_{1},c_{2}\in\mathbb{R}_{+}$
and $m\geq2$ , $q\left(c_{1}\alpha^{2},c_{2}\alpha^{m}\right)$ is
a decreasing function of $\alpha\in(0,\infty)$.

\end{restatable}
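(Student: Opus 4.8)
The plan is to differentiate the single-variable function $\alpha\mapsto q\!\left(c_{1}\alpha^{2},c_{2}\alpha^{m}\right)$ and show its derivative is negative on $(0,\infty)$. The cleanest device is the probabilistic reading of (\ref{eq:q}): writing $K\sim\mathrm{Pois}(r)$, the series gives $q(s,r)=s\,\E\!\left[(s+K)^{-1}\right]$. Setting $s=c_{1}\alpha^{2}$ and $r=c_{2}\alpha^{m}$, the chain rule yields
\begin{equation*}
\frac{d}{d\alpha}\,q\!\left(c_{1}\alpha^{2},c_{2}\alpha^{m}\right)=2c_{1}\alpha\,q_{s}+mc_{2}\alpha^{m-1}q_{r}=\frac{1}{\alpha}\left(2s\,q_{s}+mr\,q_{r}\right),
\end{equation*}
so it suffices to prove $2s\,q_{s}+mr\,q_{r}\le0$ for all $s,r>0$ and every $m\ge2$.

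First I would compute the two partial derivatives, either directly from the series (\ref{eq:q}) or via the Poisson difference calculus $\tfrac{d}{dr}\E[f(K)]=\E[f(K+1)-f(K)]$. Both routes give $q_{s}=\E\!\left[K(s+K)^{-2}\right]>0$ and $q_{r}=-s\,\E\!\left[((s+K)(s+K+1))^{-1}\right]<0$, so the two terms in the bracket have opposite signs and the task is to show that the negative $r$-term dominates. The key simplification is the Poisson size-bias identity $\E[K\,g(K)]=r\,\E[g(K+1)]$ (immediate by reindexing the series), which applied to $g(k)=(s+k)^{-2}$ turns the first term into $2s\,q_{s}=2sr\,\E\!\left[(s+K+1)^{-2}\right]$. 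Then
\begin{equation*}
2s\,q_{s}+mr\,q_{r}=sr\left(2\,\E\!\left[(s+K+1)^{-2}\right]-m\,\E\!\left[((s+K)(s+K+1))^{-1}\right]\right),
\end{equation*}
and since $sr>0$ and $m\ge2$ it is enough to verify the inequality with $m$ replaced by $2$, i.e. the pointwise bound $(s+k+1)^{-2}\le((s+k)(s+k+1))^{-1}$, which holds for every $k\ge0$ because $s+k\le s+k+1$. This also shows the inequality is strict for $r>0$, so the function is in fact strictly decreasing.

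I expect the main obstacle to be organizational rather than analytic: because $q$ is increasing in $s$ and decreasing in $r$, monotonicity cannot follow from crude sign considerations, and a naive term-by-term comparison of $\E[K(s+K)^{-2}]$ against $\E[((s+K)(s+K+1))^{-1}]$ is awkward owing to the stray factor $K$. The size-bias identity is exactly what removes this factor and renders the comparison pointwise; identifying it (or equivalently performing the corresponding index shift on the series) is the one nonroutine step. The hypothesis $m\ge2$ enters precisely in matching the factor $2$ produced by $ds/d\alpha=2c_{1}\alpha$ against the factor $m$ from $dr/d\alpha$, which is consistent with the univariate ($m=1$) problem having a finite interior optimum and hence failing monotonicity. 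Interchange of summation and differentiation is justified throughout, since (\ref{eq:q}) is entire in $r$ and smooth in $s>0$.
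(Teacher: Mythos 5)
Your proposal is correct and follows essentially the same route as the paper: both compute $q_{s}$ and $q_{r}$ by term-by-term differentiation (your size-bias identity $\E\left[Kg\left(K\right)\right]=r\,\E\left[g\left(K+1\right)\right]$ is exactly the index shift the paper performs in writing $q_{s}=re^{-r}\sum_{k}\frac{r^{k}}{k!}\frac{1}{\left(s+k+1\right)^{2}}$), apply the chain rule so the factor $2$ from $ds/d\alpha$ meets the factor $m$ from $dr/d\alpha$, and conclude via the pointwise comparison $\left(s+k+1\right)^{-2}\leq\left(\left(s+k\right)\left(s+k+1\right)\right)^{-1}$, which is the paper's per-term factor $2-m-\frac{m}{s+k}<0$ in disguise. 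Your probabilistic packaging is a clean presentation of the same calculation, and your observation that the derivative is strictly negative is a mild sharpening of the claim.
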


\section{Alternate performance criteria\label{subsec:Alternate-optimality-criteria}}

\subsection{Motivation}

Estimation MMSE is generally difficult to evaluate, except in specific
cases where the posterior distribution can be described analytically,
such as the model considered in the present work. This motivates the
optimization of more tractable approximations or bounds to the MMSE.
Of these, the most commonly used is Fisher information, which is related
to the MMSE through the Cramér-Rao bound (CRB) and the Bayesian Cramér-Rao
bound (BCRB) \citep{VanTrees2004Detection,YaeMei10}. Another class
of proxies is the MSE of suboptimal estimators such as the maximum
likelihood estimator (as used, e.g., in \citep{Finkelstein2018})
or the maximum a-posteriori estimator (as used, e.g., in \citep{WanStoLee2016}).
In section \ref{subsec:Two-dimensional-uniform-Gaussian} we compare
the minimization of MMSE in the model (\ref{eq:prior})-(\ref{eq:likelihood})
to optimization of several other performance criteria, in order to
evaluate their applicability as proxies to MMSE minimization.

\subsection{MSE of the maximum likelihood estimator\label{subsec:ML}}

Previous works such as \citep{Finkelstein2018} use the MSE of the
Maximum Likelihood (ML) estimator as a performance criterion. The
ML estimator may be easier to compute, especially for non-uniform
coding. However, it is not the optimal estimator, so optimization
of encoding for subsequent ML decoding does not optimize decoding
error of the encoder-decoder system. Using the ML estimator may be
justified for long decoding time or large firing rates, where the
effect of the prior on the posterior distribution becomes negligible.
However, in this regime, Fisher information may also be useful and
easier to evaluate. In section \ref{subsec:Two-dimensional-uniform-Gaussian},
we investigate the minimization of ML estimation error in our model
to assess whether it resembles minimization of MMSE for short decoding
times, at least qualitatively.

There is a difficulty with the definition of the ML estimator in the
uniform-Gaussian model: when there are no spikes, the likelihood (\ref{eq:likelihood})
is $\mathrm{P}\left(N_{T}=0|X=x\right)=e^{-rT}$, which is independent
of $x$, so the ML estimator is undefined. This is a direct consequence
of the uniform coding property, as the population rate $r$ is independent
of $x$. We therefore use a modified ML estimator, which equals the
prior $\mu_{0}$ in the case where there are no spikes,
\begin{equation}
\hat{X}_{t}^{\mathrm{ML}}=\begin{cases}
\frac{1}{N_{t}}\sum_{i=1}^{N_{t}}\theta_{i} & N_{t}>0,\\
\mu_{0} & N_{t}=0,
\end{cases}\label{eq:modified-ml}
\end{equation}
where $\theta_{i}$ is the preferred stimulus of the neuron firing
the $i$th spike. This modified ML estimator may be obtained from
the MMSE estimator in the limit $\Sigma_{0}^{-1}\to0$.

The MSE of this estimator can be written in closed form (see appendix
(\ref{subsec:app-ML})) as
\begin{align}
\epsilon\left(\hat{X}_{T}^{\mathrm{ML}}\right) & =e^{-rT}\left[\left(\sum_{k=1}^{\infty}\frac{\left(rT\right)^{k}}{k!\,k}\right)\sum_{i=1}^{m}\alpha_{i}^{2}+\sum_{i=1}^{m}\sigma_{i}^{2}\right],\label{eq:ml-mse}
\end{align}
where the second term is related to the case where there are no spikes.
For fixed $h$ and $r=\left(2\pi\right)^{m/2}h\prod_{i}\alpha_{i}$,
ML MSE (\ref{eq:ml-mse}) is minimized by minimizing $\sum_{i}\alpha_{i}^{2}$,
yielding the symmetric solution
\begin{equation}
\alpha_{1}=\alpha_{2}=\cdots=\alpha_{m}=\left(2\pi\right)^{-1/2}\left(\frac{r}{h}\right)^{1/m}.\label{eq:ml-mse-minimizer}
\end{equation}

\subsection{Fisher information / Cramér-Rao bound}

The CRB is a lower bound on the MSE of an \emph{unbiased} estimator
for a non-random parameter. In a Bayesian setting, it may be used
to bound the \emph{conditional }MSE of an unbiased estimator $\hat{X}$
conditioned on the random state $X$. Under suitable regularity conditions,
the bound reads \citep{VanTrees2004Detection}

\[
S\left(\hat{X}|X\right)\coloneqq\E\left[\left.\left(\hat{X}-X\right)\left(\hat{X}-X\right)\transpose\right|X\right]\succeq J\left(X\right)^{-1},
\]
where the relation $\succeq$ means that the difference $S\left(\hat{X}|X\right)-J\left(X\right)^{-1}$
is positive semidefinite, and $J\left(X\right)$ is the Fisher information
matrix, related to the likelihood function $p_{Y|X}$ as
\[
J_{ij}\left(x\right)\coloneqq-\E\left[\left.\frac{\partial^{2}\log p_{Y|X}\left(Y|x\right)}{\partial x_{i}\partial x_{j}}\right|X=x\right].
\]

As noted in the introduction, use of the Fisher information as a performance
criterion may be justified asymptotically, for large decoding times
or large population total firing rate. In this limit, the optimal
MSE approaches the CRB under mild conditions (see discussion in \citep{Bethge2002}).
Outside of the asymptotic regime, minimization of the CRB may differ
from MSE minimization since optimal estimators are typically biased,
and even among unbiased estimators, the bound may not be attainable.
Note that the continuous population model we study here is generally
\emph{not} within this asymptotic regime despite having infinitely
many neurons, since the total firing rate is kept finite.

In the case of a scalar state $x$ and a finite neural population
with tuning functions $\lambda_{k}$, Fisher information for the estimation
of $x$ from the spike counts over time interval $T$ is given by
\citep{DayAbb05} 
\[
J\left(x\right)=T\sum_{k}\frac{\left(\lambda'_{k}\left(x\right)\right)^{2}}{\lambda_{k}\left(x\right)}.
\]
Similarly, in the continuous population model with multivariate state,
(\ref{eq:gauss-tc})-(\ref{eq:likelihood}), the Fisher information
matrix takes the form
\begin{equation}
J\left(x\right)=T\int\frac{\nabla_{x}\lambda\left(x;\theta\right)\nabla_{x}\lambda\left(x;\theta\right)\transpose}{\lambda\left(x;\theta\right)}d\theta=h'T\left|R\right|^{-1/2}R=rTR,\label{eq:uniform-gaussian-fisher}
\end{equation}
(see derivation in appendix, section \ref{sec:Fisher-infinite}) and
the MSE of any \emph{unbiased }estimator is therefore bounded by 
\begin{equation}
\epsilon\left(\hat{X}\right)=\E\,\mathrm{tr}\,S\left(\hat{X}|X\right)\geq\E\,\mathrm{tr}\left(J^{-1}\right)=\frac{1}{rT}\mathrm{tr}\left(R^{-1}\right)=\frac{\sum_{i}\alpha_{i}^{2}}{rT}=\frac{\sum_{i}\alpha_{i}^{2}}{h'T\prod_{i}\alpha_{i}},\label{eq:inverse-fisher}
\end{equation}
For fixed population rate $r$, this bound is minimized by the same
tuning widths (\ref{eq:ml-mse-minimizer}) as the MSE of the modified
ML estimator (\ref{eq:modified-ml}). This results in the minimal
population inverse Fisher information

\begin{equation}
\min\E\,\mathrm{tr}\left(J^{-1}\right)=\frac{m}{h'T}\alpha^{2-m},\label{eq:inverse-fisher-opt}
\end{equation}
where the tuning width $\alpha$ is related to rate density $h$ and
population rate $r$ through (\ref{eq:ml-mse-minimizer}). Equation
(\ref{eq:inverse-fisher-opt}) demonstrates that in the absence of
population rate constraint, the Fisher-optimal widths are zero in
the univariate case $m=1$ and infinite for $m>2$, whereas in the
two-dimensional case $m=2$, population Fisher information is independent
of tuning width. This is a special case of a result from \citep{Zhang1999}
which applies more generally to radially symmetric tuning. This preference
for wide tuning in higher dimensions may be explained by the fact
that encoding accuracy in each dimension benefits from wide tuning
in other dimensions, due to an increase in population firing rate,
as noted in \citep{Eurich2000} (see Figure 1 therein).

Note that the modified ML estimator (\ref{eq:modified-ml}) is not
an unbiased estimator; its bias is $\left(\mu_{0}-X\right)P\left\{ N_{t}=0|X\right\} =\left(\mu_{0}-X\right)e^{-rT}$.
Therefore its MSE is not bounded by (\ref{eq:inverse-fisher}), though
we may expect it to satisfy the bound when $P\left(N_{t}=0\right)$
is small.

\subsection{Bayesian Cramér-Rao bound}

The Bayesian Cramér-Rao Bound (BCRB) is a lower bound on estimation
error in a Bayesian setting. Under suitable regularity conditions,
the MSE of any estimator $\hat{X}$ is bounded as \citep{VanTrees2004Detection}
\begin{equation}
S\left(\hat{X}\right)\coloneqq\E\left[\left(\hat{X}-X\right)\left(\hat{X}-X\right)\transpose\right]\succeq\left(\E\left[J\left(X\right)\right]+J_{\mathrm{P}}\right)^{-1},\label{eq:bcrb}
\end{equation}
where $J\left(X\right)$ is the Fisher information matrix and $J_{\mathrm{P}}$
is the Bayesian information matrix, related to the prior density $p_{X}$
by
\[
\left[J_{\mathrm{P}}\right]_{ij}=-\E\left[\frac{\partial^{2}\log p_{X}\left(X\right)}{\partial x_{i}\partial x_{j}}\right].
\]
The BCRB is not restricted to unbiased estimators and is therefore
applicable as a bound on optimal estimation error in a Bayesian setting
outside of the asymptotic regime of large decoding time or population
firing rate. On the other hand, unlike the CRB, the BCRB is generally
not attained asymptotically: in the limit of infinite decoding time,
the MMSE approaches the expected value of the CRB, $\E[J\left(X\right)^{-1}]$,
whereas the BCRB approaches $\E\left[J\left(X\right)\right]^{-1}$,
which generally underestimates the asymptotic MMSE \citep{VanTrees2004Detection}.
In other words, the ratio $\mathrm{MMSE}/\mathrm{BCRB}$ may approach
a value strictly larger 1 in the infinite decoding time limit. As
evident from (\ref{eq:bcrb}), in the univariate case minimization
of the BCRB is equivalent to maximization of Fisher information. This
equivalence does not hold in higher dimensions, as demonstrated below
in the 2-dimensional case.

In the continuous population model (\ref{eq:prior})-(\ref{eq:likelihood}),
the BCRB reads

\[
S\left(\hat{X}\right)\succeq\left(h'T\left|R\right|^{-1/2}R+\Sigma_{0}^{-1}\right)^{-1},
\]
where we have used (\ref{eq:uniform-gaussian-fisher}). Applying the
bound to the MMSE estimator in the diagonal case, 
\begin{align}
\epsilon_{\mathrm{MMSE}}=\mathrm{tr}\,S\left(\hat{X}_{\mathrm{MMSE}}\right) & \geq\sum_{i=1}^{m}\left(\sigma_{0,i}^{-2}+h'T\left(\prod_{j}\alpha_{j}\right)\alpha_{i}^{-2}\right)^{-1}\label{eq:uniform-gaussian-bcrb}\\
 & =\sum_{i=1}^{m}\left(\sigma_{0,i}^{-2}+rT\alpha_{i}^{-2}\right)^{-1},
\end{align}
which is the same lower bound as in (\ref{eq:mmse-bound}).

\section{Optimal encoding\label{sec:Optimal-Encoding}}

\subsection{MMSE-optimal multivariate tuning\label{subsec:MMSE-optimal-multivariate-tuning}}

Substituting (\ref{eq:mmse-uniform-gaussian}) into (\ref{eq:opt}),
we obtain the optimization problem
\begin{align}
\min_{R,h'}\quad & \exp\left(-\frac{h'T}{\sqrt{\det R}}\right)\sum_{k=0}^{\infty}\frac{1}{k!}\left(\frac{h'T}{\sqrt{\det R}}\right)^{k}\mathrm{tr}\left((kR+\Sigma_{0}^{-1})^{-1}\right),\label{eq:opt-multi}\\
\mathrm{s.t.}\quad & h'\leq\bar{h}',\nonumber \\
 & \frac{h'}{\sqrt{\det R}}\leq\bar{r},\nonumber 
\end{align}
Denote the spectral decomposition for the prior variance by $\Sigma_{0}\coloneqq V_{0}\Lambda_{0}V_{0}\transpose$
with $\Lambda_{0}$ diagonal and $V_{0}$ orthogonal. The principal
components of the random vector $X$ are $\tilde{X}\coloneqq V_{0}\transpose X$,
which has the diagonal prior variance $\E[\tilde{X}\tilde{X}\transpose]=\Lambda_{0}$.
The population rate-density (\ref{eq:gauss-tc}) may be written in
terms of the principal components as
\begin{align}
\tilde{\lambda}\left(\tilde{x};\tilde{\theta}\right) & \coloneqq h\exp\left(-\frac{1}{2}\left\Vert \tilde{x}-\tilde{\theta}\right\Vert _{\tilde{R}}^{2}\right)=\lambda\left(x;\theta\right),\label{eq:gauss-tc-pc}
\end{align}
where 
\begin{align*}
\tilde{R} & =V_{0}\transpose RV_{0},\\
\tilde{x} & =V_{0}\transpose x,\quad\tilde{\theta}=V_{0}\transpose\theta.
\end{align*}
The optimization problem may be reformulated as minimization of decoding
MMSE for the principal component vector $\tilde{X}$, since neither
the objective function nor the constraints in (\ref{eq:opt}) are
affected by the orthogonal transformation $V_{0}$. Specifically,
the decoding MMSE of $\tilde{X}$ is $\mathrm{tr}[V_{0}E_{X}V_{0}\transpose]=\mathrm{tr}\left[V\transpose_{0}V_{0}E_{X}\right]=E_{X}$
where $E_{X}\coloneqq\E[(X-\hat{X})(X-\hat{X})]\transpose$ is the
decoding MMSE of $X$; and the optimization constraints depend on
$R$ only through $\det R=\det\tilde{R}$. We therefore \emph{assume
from here on, without loss of generality, that $\Sigma_{0}$ is diagonal}
with decreasing elements,
\begin{align*}
\Sigma_{0} & =\diag\left(\sigma_{0,1}^{2},\sigma_{0,2}^{2},\ldots\sigma_{0,m}^{2}\right),\\
 & \sigma_{0,1}^{2}\geq\sigma_{0,2}^{2}\geq\cdots\geq\sigma_{0,m}^{2}.
\end{align*}

The problem is further reduced by writing the spectral decomposition
$R=U\Lambda U\transpose$ and $Q\coloneqq\Sigma_{0}^{-1}$, and using
the following claim, proved in appendix \ref{sec:proofs}
\begin{restatable}{clm}{trInv}\label{claim:tr-inv}

Let $\Lambda,Q$ be diagonal positive definite matrices, $\Lambda=\mathrm{diag}\left(\lambda_{i}\right),Q=\mathrm{diag}\left(q_{i}\right)$
where $0<\lambda_{1}\leq\lambda_{2}\leq\cdots\lambda_{n}$ and $0<q_{1}\leq q_{2}\leq\cdots\leq q_{n}$.
Then the constrained minimization problem
\begin{align*}
\min_{U}\; & \mathrm{tr}\left[\left(Q+U\Lambda U\transpose\right)^{-1}\right]\\
\mathrm{s.t.}\; & UU\transpose=I
\end{align*}
is solved by $U$ anti-diagonal $u_{ij}=\delta_{i,\left(n+1-j\right)}$.

\end{restatable}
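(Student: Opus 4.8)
The plan is to minimize over the compact orthogonal group, so that a minimizer exists, and then to show in two steps that it can be taken to be the claimed anti-diagonal matrix. Writing $A\coloneqq U\Lambda U\transpose$, observe that as $U$ ranges over orthogonal matrices $A$ ranges over all symmetric matrices isospectral to $\Lambda$, i.e.\ with eigenvalues $\lambda_1,\ldots,\lambda_n$, and the objective depends on $U$ only through $A$. The first step shows that at a minimizer $A$ commutes with $Q$; the second step is a rearrangement argument reducing the remaining finite optimization over permutations to the anti-sorted pairing.

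For the first step I would differentiate along the one-parameter family $U(t)=e^{tX}U$ with $X$ skew-symmetric, so that $A(t)=e^{tX}Ae^{-tX}$ traces out the isospectral orbit with $\dot A|_{t=0}=[X,A]$; every tangent direction to the orbit arises this way. Using $d\,\mathrm{tr}[(Q+A)^{-1}]=-\mathrm{tr}[(Q+A)^{-2}\,dA]$ and the cyclic property of the trace, the stationarity condition becomes $\mathrm{tr}\big(X\,[A,(Q+A)^{-2}]\big)=0$ for all skew-symmetric $X$. Since $A$ and $(Q+A)^{-2}$ are symmetric, their commutator is skew-symmetric, so this forces $[A,(Q+A)^{-2}]=0$. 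Writing $B\coloneqq Q+A\succ 0$ and using $A=B-Q$, this reduces to $[Q,B^{-2}]=0$; because $B$ is positive definite, $B$ and $B^{-2}$ are functions of one another, so $Q$ commutes with $B$ and hence with $A=B-Q$. Thus at any minimizer $A$ and $Q$ are simultaneously diagonalizable.

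Given $[Q,A]=0$, and after diagonalizing $A$ within each eigenspace of $Q$ (which leaves the objective unchanged, since on a block where $Q=cI$ the trace depends only on the preserved eigenvalues of that block), I may assume $A=\diag(\lambda_{\pi(1)},\ldots,\lambda_{\pi(n)})$ for a permutation $\pi$, so the objective becomes $\sum_i(q_i+\lambda_{\pi(i)})^{-1}$. The final step is the rearrangement inequality: the kernel $(q,\lambda)\mapsto(q+\lambda)^{-1}$ has cross-partial $\partial_q\partial_\lambda(q+\lambda)^{-1}=2(q+\lambda)^{-3}>0$, hence is supermodular, and for a supermodular kernel the sum $\sum_i g(q_i,\lambda_{\pi(i)})$ is minimized by the opposite (anti-sorted) pairing. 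With both $(q_i)$ and $(\lambda_i)$ increasing, this pairing is $q_i\leftrightarrow\lambda_{n+1-i}$, which is exactly the anti-diagonal $U$ with $u_{ij}=\delta_{i,(n+1-j)}$; since this $U$ is feasible and attains the value of the global minimum, it solves the problem.

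I expect the main obstacle to be the first step --- reducing the stationarity condition on the orthogonal group to the clean relation $[Q,A]=0$ --- together with the bookkeeping when $Q$ or $\Lambda$ has repeated entries, where $[Q,A]=0$ does not immediately force $A$ diagonal and one must check that diagonalizing within degenerate blocks does not change the objective. Everything after the reduction to permutations is the standard rearrangement inequality. I would also note that the claim only asserts that the anti-diagonal $U$ attains the minimum; with ties there may be other minimizers, but the anti-sorted pairing is always among them.
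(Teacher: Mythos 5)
Your proof is correct, and while it shares the paper's overall skeleton---first-order optimality forces commutation with $Q$, reducing the problem to a finite optimization over permutations, which is then settled by a rearrangement argument---the execution differs in ways worth noting. The paper differentiates the objective entrywise in $U$ with Lagrange multipliers for the constraint $UU\transpose=I$, concludes that $\left(Q+R\right)^{-2}R$ is symmetric for $R=U\Lambda U\transpose$, and then needs \emph{distinct} eigenvalues twice: distinctness of the $\lambda_{i}$ to deduce that $\left(Q+R\right)^{-2}$ is diagonalized by $U$, and distinctness of the $q_{i}$ to deduce that $R$ is diagonal; repeated eigenvalues are patched afterwards by a separate continuity argument. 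Your variation along the isospectral orbit $A\left(t\right)=e^{tX}Ae^{-tX}$ yields the equivalent stationarity condition $[A,\left(Q+A\right)^{-2}]=0$, but your next step---rewriting it as $[Q,B^{-2}]=0$ with $B=Q+A\succ0$ and invoking $B=\left(B^{-2}\right)^{-1/2}$ to conclude $[Q,B]=0$, hence $[Q,A]=0$---requires no distinctness hypothesis at all, and your block-wise diagonalization of $A$ inside the eigenspaces of $Q$ (valid, since on a block where $Q=cI$ the objective depends only on the block's spectrum, and the conjugating block-orthogonal matrix keeps $A$ in the feasible orbit) disposes of degenerate spectra directly, so the paper's continuity step disappears. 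You also make explicit the existence of a minimizer via compactness of the orthogonal group, which the paper's necessary-conditions argument tacitly assumes. The final step is identical in substance: the paper's explicit two-index swap computation is precisely the verification that $\left(q,\lambda\right)\mapsto\left(q+\lambda\right)^{-1}$ is supermodular, so its swap inequality and your rearrangement lemma are the same fact, and your closing remark that with ties the anti-sorted pairing is merely \emph{among} the minimizers is consistent with the claim as stated. Net assessment: your route is more uniform and self-contained, handling repeated eigenvalues in one pass, at the cost of slightly heavier machinery (orbit tangent spaces and functional calculus), whereas the paper's is more elementary and computational.
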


This shows that for a fixed choice of eigenvalues for $R$, each term
in (\ref{eq:opt-multi}) is minimized by the same choice of eigenvectors,
namely antidiagonal $u_{ij}=\delta_{i,\left(n+1-j\right)}$, which
makes $R$ a diagonal matrix with elements sorted in decreasing order.
Therefore this $U$ also minimizes the sum in (\ref{eq:opt-multi})
for fixed eigenvalues. Since the constraints depend on $R$ only through
its eigenvalues, the optimal solution is of the same form, and in
particular $R$ is diagonal and tuning is narrower (larger eigenvalues
of $R$) in directions with larger prior variance (smaller eigenvalues
of $Q$). In the more general case where $X$ has correlated components,
applying this result to the principal components $\tilde{X}$ as outlined
above shows that optimal tuning is aligned to principal components,
$R=V_{0}\tilde{R}V_{0}\transpose$ where $\tilde{R}$ is diagonal
and principal components are in the columns of $V_{0}$.

We have reduced the problem to the case of diagonal prior $\Sigma_{0}=\diag\left(\sigma_{0,1}^{2},\sigma_{0,2}^{2},\ldots\sigma_{0,m}^{2}\right)$
and tuning $R=\diag\left(\alpha_{1}^{-2},\ldots\alpha_{m}^{-2}\right)$.
Substituting (\ref{eq:mmse-uniform-gaussian-diag}) into (\ref{eq:opt}),
we obtain the optimization problem
\begin{align*}
\min_{\boldsymbol{\alpha},h'}\quad & \sum_{i=1}^{m}\sigma_{0,i}^{2}q\left(\frac{\alpha_{i}^{2}}{\sigma_{0,i}^{2}},h'T\prod_{j}\alpha_{j}\right),\\
\mathrm{s.t.}\quad & h'\leq\bar{h}',\\
 & h'\prod_{i}\alpha_{i}\leq\bar{r},
\end{align*}
where $\boldsymbol{\alpha}=\left(\alpha_{1},\ldots\alpha_{m}\right),h'=\left(2\pi\right)^{m/2}h$,
$\bar{h}'=\left(2\pi\right)^{m/2}\bar{h}$, and $q$ is given by (\ref{eq:q}).
The problem may be re-parameterized in terms of $\boldsymbol{\alpha}$
and $r$, where $r=h'\prod\alpha_{i}$ is the total population rate,
\begin{align}
\min_{\boldsymbol{\alpha},r}\quad & \sum_{i=1}^{m}\sigma_{0,i}^{2}q\left(\frac{\alpha_{i}^{2}}{\sigma_{0,i}^{2}},rT\right),\label{eq:opt-r-alpha}\\
\mathrm{s.t.}\quad & r\leq\bar{h}'\prod_{i}\alpha_{i},\nonumber \\
 & r\leq\bar{r}.\nonumber 
\end{align}
Note that $q\left(s,r\right)$ is increasing in $s$ for fixed $r$,
as evident from (\ref{eq:q}). Therefore, for fixed $r$, the objective
(\ref{eq:opt-r-alpha}) is an increasing function of each $\alpha_{i}$,
so the solution satisfies the first constraint with equality, $\prod_{i}\alpha_{i}=r/\bar{h}'$,
or equivalently $h'=\bar{h}'$, and the problem reduces to 
\begin{align}
\min_{\boldsymbol{\alpha}\in\mathbb{R}^{m}}\quad & \sum_{i=1}^{m}\sigma_{0,i}^{2}q\left(\frac{\alpha_{i}^{2}}{\sigma_{0,i}^{2}},\bar{h}'T\prod_{j}\alpha_{j}\right)\label{eq:reduced-obj}\\
\mathrm{s.t.}\quad & \prod_{i}\alpha_{i}\leq\frac{\bar{r}}{\bar{h}'}.\label{eq:reduced-constraint}
\end{align}
The unconstrained version of this problem in the scalar case $m=1$
has been studied in \citep{YaeMei10}, where it was shown that optimal
tuning width decreases with encoding time and increases with prior
variance, consistent with experimental results. In this scalar case,
the addition of the constraint (\ref{eq:reduced-constraint}) modifies
the solution in a simple way by fixing the tuning width $\alpha$
to its constraint $\bar{r}/\bar{h}'$ when the unconstrained problem
is solved by a larger value of $\alpha$.

Using claim \ref{claim:q-decreasing}, for $m\geq2$, any scaling
of all $\alpha_{i}$ by the same factor $>1$ would reduce the MMSE,
so the remaining constraint is also satisfied with equality, yielding
the problem
\begin{align}
\min_{\boldsymbol{\alpha}\in\mathbb{R}^{m}}\quad & \sum_{i=1}^{m}\sigma_{0,i}^{2}q\left(\frac{\alpha_{i}^{2}}{\sigma_{0,i}^{2}},\bar{r}T\right)\quad\left(m\geq2\right)\label{eq:diagonal-obj}\\
\mathrm{s.t.}\quad & \prod_{i}\alpha_{i}=\frac{\bar{r}}{\bar{h}'}\label{eq:diagonal-constraint}
\end{align}
Similarly, claim \ref{claim:q-decreasing} implies that the unconstrained
version of (\ref{eq:reduced-obj}) has no finite solution for $m\geq2$,
as the minimal error is attained in the limit of infinitely wide tuning.
Equivalently, the minimization of the MMSE with neuron-level constraint
only and no population-level constraint is solved in the limit of
infinitely wide tuning, demonstrating the importance of the population-level
constraint in this model.

To understand this difference between the scalar case $m=1$ and vector
case $m\geq2$, note that broadening the tuning has two effects: \emph{(i)}
increasing the spike rate, and \emph{(ii)} making each spike less
informative. Specifically, the population spike rate is proportional
to the product of all tuning widths, $r=h'\prod_{i}\alpha_{i}$, so
that wider tuning yields higher population rates. On the other hand,
(\ref{eq:posterior}) in the diagonal case reads 
\[
\sigma_{t,i}^{-2}=\sigma_{0,i}^{-2}+N_{t}\alpha_{i}^{-2},
\]
where $\sigma_{t,i}^{-2}$ is the posterior precision (inverse of
variance) of $X_{i}$ conditioned on $\boldsymbol{N}_{t}$, so that
narrower tuning yields more informative spikes. In the scalar case,
$m=1$, the trade-off between these two effects leads to a finite
optimal tuning width. In higher dimensions, $m\geq2$, all tuning
widths contribute to the firing rate, but the effect of each spike
on the posterior precision in dimension $i$ depends only on $\alpha_{i}$.
Thus, optimizing the tuning width $\alpha_{i}$ involves a trade-off
for the posterior variance $\sigma_{t,i}^{2}$ of $X_{i}$, but affects
$\sigma_{t,j}^{2}$ for any other direction $j\neq i$ monotonically:
increasing the tuning width $\alpha_{i}$ can only decrease $\sigma_{t,j}^{2}$
for $j\neq i$, through its effect on the population rate. This shifts
the trade-off towards wider tuning in higher dimensions.

This effect can be seen more explicitly by approximating the objective
(\ref{eq:reduced-obj}) using the bounds (\ref{eq:q-bounds}) as
\[
\epsilon'\left(\boldsymbol{\alpha};0\right)\leq\mathrm{MMSE}\leq\epsilon'\left(\boldsymbol{\alpha};1\right)
\]
where
\[
\epsilon'\left(\boldsymbol{\alpha};\beta\right)\coloneqq\sum_{i=1}^{m}\left(\sigma_{0,i}^{-2}+\frac{\bar{h}'T\prod_{j}\alpha_{j}}{\alpha_{i}^{2}+\beta\sigma_{0,i}^{2}}\right)^{-1}.
\]
Consider the effect of scaling the tuning width by a common factor
$c$ in each dimension,
\begin{equation}
\epsilon'\left(c\alpha_{1},\ldots,c\alpha_{m};\beta\right)=\sum_{i=1}^{m}\left(\sigma_{0,i}^{-2}+\bar{r}T\frac{c^{m}}{c^{2}\alpha_{i}^{2}+\beta\sigma_{0,i}^{2}}\right)^{-1}.\label{eq:bound-scaling}
\end{equation}
As a function of the scaling $c$, the lower bound ($\beta=0$) is
increasing when $m=1$, independent of $c$ when $m=2$ and decreasing
when $m>2$. The upper bound ($\beta=1$) is decreasing for $m\geq2$;
for $m=1$ each term in the upper bound ($\beta=1$) is minimized
at the finite value $c=\sigma_{0,i}^{2}/\alpha_{i}^{2}$ and is maximized
at $c=0$ and $c\to\infty$.

\subsection{Two-dimensional uniform-Gaussian encoding\label{subsec:Two-dimensional-uniform-Gaussian}}

Our analysis indicates that there is a qualitative difference between
univariate coding, where MMSE is minimized at a finite tuning width
for fixed maximal rate-density $h$, and multivariate coding, where
an additional population-level constraint is necessary to achieve
finite width. To study the effect of this constraint, we numerically
analyze the solution to the constrained optimization problem (\ref{eq:diagonal-obj})-(\ref{eq:diagonal-constraint})
in the simplest multivariate setting of two dimensions. In this setting,
the parameters optimized are two tuning widths $\alpha_{1},\alpha_{2}$,
with the constraint (\ref{eq:diagonal-constraint}) removing one degree
of freedom, reducing the problem to a univariate optimization problem.
Defining 
\[
\gamma_{i}\coloneqq\frac{\alpha_{i}}{\alpha_{1}+\alpha_{2}},
\]
the problem may be formulated as choosing the optimal $\gamma_{1}\in\left[0,1\right]$.
Values of $\gamma_{1}$ near 0 correspond to narrow tuning in dimension
1 and wide tuning in dimension 2, which may be interpreted as ``one-dimensional''
encoding focusing on dimension 1. Similarly, values of $\gamma_{1}$
near 1 correspond to ``one-dimensional encoding'' of dimension 2.
The case $\gamma_{1}=\frac{1}{2}$ corresponds to the same tuning
width in both directions (``two-dimensional encoding'').

Figure \ref{fig:encoding-time} illustrates MMSE as well as alternate
performance criteria described in section \ref{subsec:Alternate-optimality-criteria},
for two-dimensional encoding in the diagonal case (\ref{eq:diagonal-obj}),
with fixed neuron-level and population-level energy constraints. The
prior variance $\Sigma_{0}$ is asymmetric: $\sigma_{0,1}^{2}<\sigma_{0,2}^{2}$.
Figure \ref{fig:criteria-time} shows the value of the performance
criteria as a function of encoding time $T$ and the tuning width
ratio $\gamma_{1}=\alpha_{1}/\left(\alpha_{1}+\alpha_{2}\right)$.
The optimal $\gamma_{1}$ is marked with a gray line, and optimal
encoding is illustrated in Figure \ref{fig:opt-grid-time} for several
encoding times. Note that the energy constraints and tuning width
ratio $\gamma_{1}$ together determine the widths $\alpha_{1},\alpha_{2}$
through the constraint (\ref{eq:diagonal-constraint}) on the product
$\alpha_{1}\alpha_{2}$. MSE-optimal tuning solving (\ref{eq:diagonal-obj})-(\ref{eq:diagonal-constraint})
is narrower in the dimension that has greater prior variance, $\alpha_{1}^{2}>\alpha_{2}^{2}$,
as shown analytically in Claim \ref{claim:tr-inv}. The size of this
effect decreases with encoding time. Optimization of the Bayesian
Cramér-Rao bound (\ref{eq:uniform-gaussian-bcrb}) shows similar behavior.
Optimization of both the ML estimator's MSE and of the (non-Bayesian)
Cramér-Rao bound yields symmetric tuning $\gamma_{1}=\frac{1}{2}$
regardless of decoding time (see (\ref{eq:ml-mse-minimizer})) --
a qualitatively incorrect result.

\begin{figure}
\centering{}\captionsetup[subfigure]{position=top}\subfloat[]{\includegraphics[height=0.3\columnwidth]{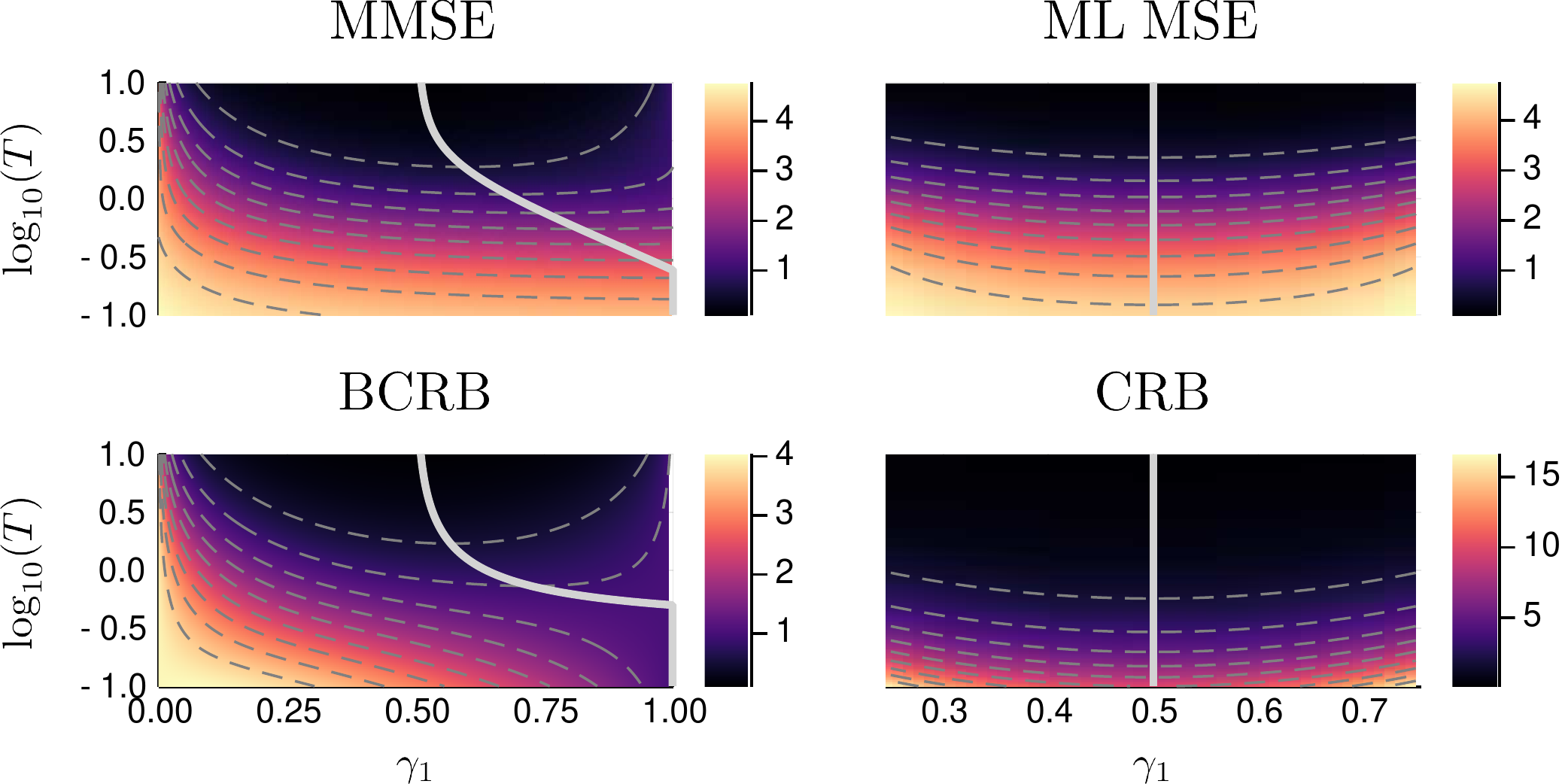}\label{fig:criteria-time}}\hfill{}\subfloat[]{\includegraphics[height=0.3\columnwidth]{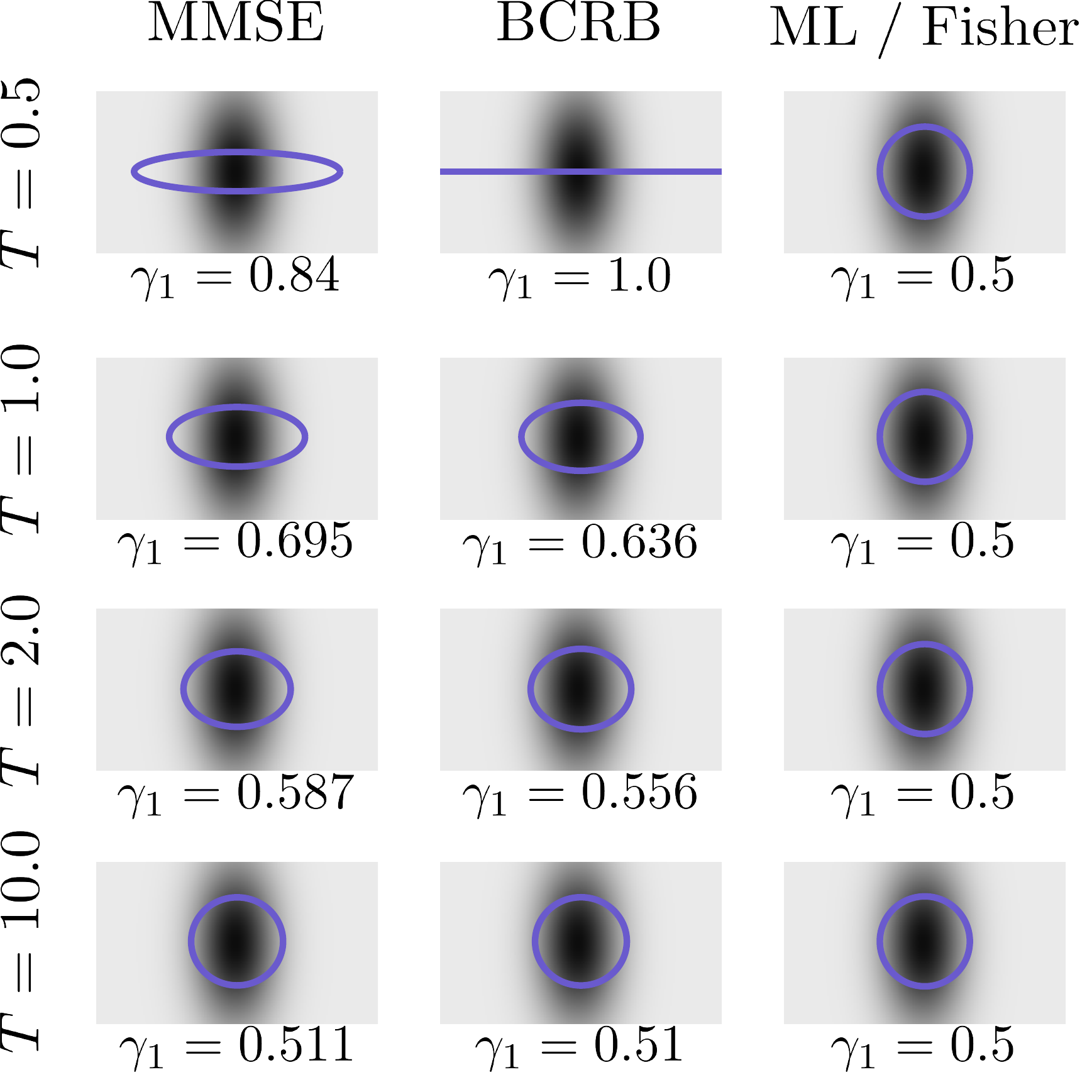}\label{fig:opt-grid-time}}\caption{Encoding performance and optimal encoding in a two-dimensional uniform-Gaussian
population with asymmetric prior $\sigma_{0,1}^{2}<\sigma_{0,2}^{2}$.
Prior distribution as well as neuron-level and population-level energy
constraints are the same in all subplots.\textbf{ (a)} Performance
criteria as a function of decoding time $T$ and tuning width ratio
$\gamma_{1}=\alpha_{1}/\left(\alpha_{1}+\alpha_{2}\right)$. Contour
levels are shown as dashed lines. Solid gray line indicates optimal
width ratio $\gamma_{1}$. For the ML MSE and CRB, the $\gamma_{1}$
axis is clipped since these criteria approach infinity in each of
the limits $\gamma_{1}\to0$, $\gamma_{1}\to1$. \textbf{(b) }Optimal
encoding according to the various criteria (columns) over several
decoding times (rows). A level set of the optimal tuning function
is shown in blue, superimposed over the prior density. Optimal tuning
width ratio is noted below each subplot. Parameters in all plots:
$\bar{r}=2.5,2\pi\bar{h}=2,\Sigma_{0}=\mathrm{diag}\left(1,4\right)$.}
\label{fig:encoding-time}
\end{figure}

The effect of the population-level energy constraint $\bar{r}$ is
explored in Figure \ref{fig:encoding-rate}, where the optimality
criteria are plotted as a function of $\bar{r}$ and the tuning width
ratio $\gamma_{1}$, with the neuron-level constraint $\bar{h}$ and
encoding time $T$ fixed. The prior variance $\Sigma_{0}$ in Figure
\ref{fig:encoding-rate} is narrower in dimension 1: $\sigma_{0,1}^{2}<\sigma_{0,2}^{2}$.
As in Figure \ref{fig:criteria-time}, this produces an asymmetry
in the opposite direction in the optimal tuning widths $\alpha_{1}^{2}>\alpha_{2}^{2}$,
or equivalently, $\gamma_{1}>\frac{1}{2}$. Tuning is almost two-dimensional
for low population rate constraint $\bar{r}$, becomes one-dimensional
at intermediate population rates, and nearly two-dimensional again
for high population rates. The BCRB correctly predicts $\gamma_{1}>\frac{1}{2}$
but fails to capture the dependence on the population constraint $\bar{r}$.
The fact that the BCRB does not depend on $\bar{r}$ in the two-dimensional
case may be seen by rewriting (\ref{eq:uniform-gaussian-bcrb}) for
the case $m=2$ as

\begin{align*}
\mathrm{BCRB} & =\left(\sigma_{0,1}^{-2}+h'T\frac{\alpha_{2}}{\alpha_{1}}\right)^{-1}+\left(\sigma_{0,2}^{-2}+h'T\frac{\alpha_{1}}{\alpha_{2}}\right)^{-1}\\
 & =\left(\sigma_{0,1}^{-2}+h'T\left(\gamma_{1}^{-1}-1\right)\right)^{-1}+\left(\sigma_{0,2}^{-2}+\frac{h'T}{\left(\gamma_{1}^{-1}-1\right)}\right)^{-1}
\end{align*}
which for fixed $h',T$ is strictly a function of $\gamma_{1}$.

Figure \ref{fig:encoding-opt} shows the dependence of MSE-optimal
tuning width ratio on prior asymmetry, quantified as the ratio $\sigma_{0,1}/\left(\sigma_{0,1}+\sigma_{0,2}\right)$,
and on encoding time (left) or population-level energy constraint
(right). Optimal tuning always has the opposite asymmetry to the prior.
As seen on the left plot, optimal tuning is one-dimensional for short
decoding time and becomes increasingly symmetric with increasing decoding
time.

\begin{figure}
\captionsetup[subfigure]{position=top}

\subfloat[]{\includegraphics[height=0.3\columnwidth]{MMSE_ML_BCRB_Fisher_vs_time}

}\hfill{}\subfloat[]{\includegraphics[height=0.3\columnwidth]{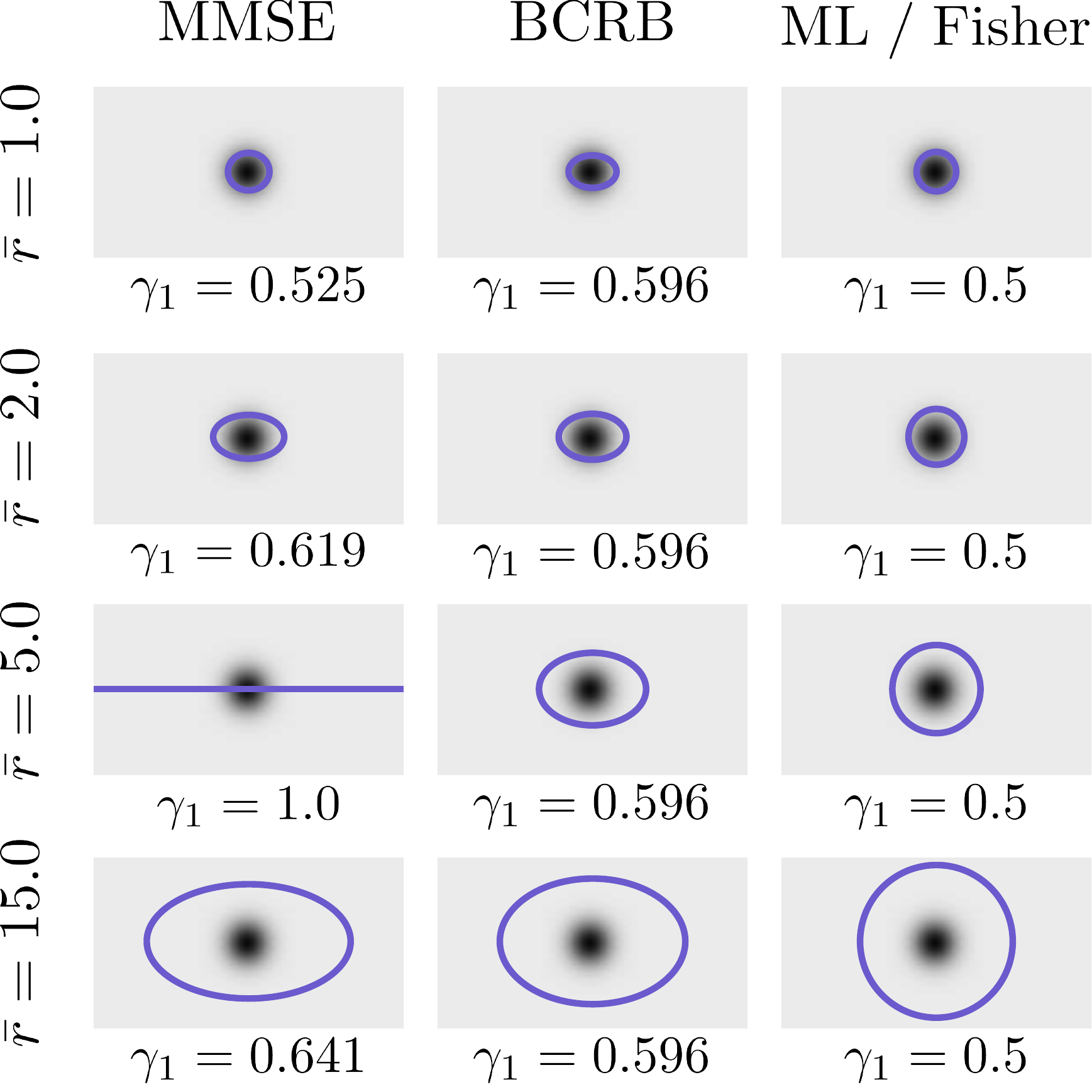}

}\caption{Encoding performance and optimal encoding as a function of population
rate constraint $\bar{r}$ and tuning width ratio $\gamma_{1}$; see
text and Figure \ref{fig:encoding-time} for details. Parameters are
$T=1,2\pi\bar{h}=1.1$, $\Sigma_{0}=\mathrm{diag}\left(1,1.05\right)$.}
\label{fig:encoding-rate}
\end{figure}

\begin{figure}
\includegraphics[width=0.5\columnwidth]{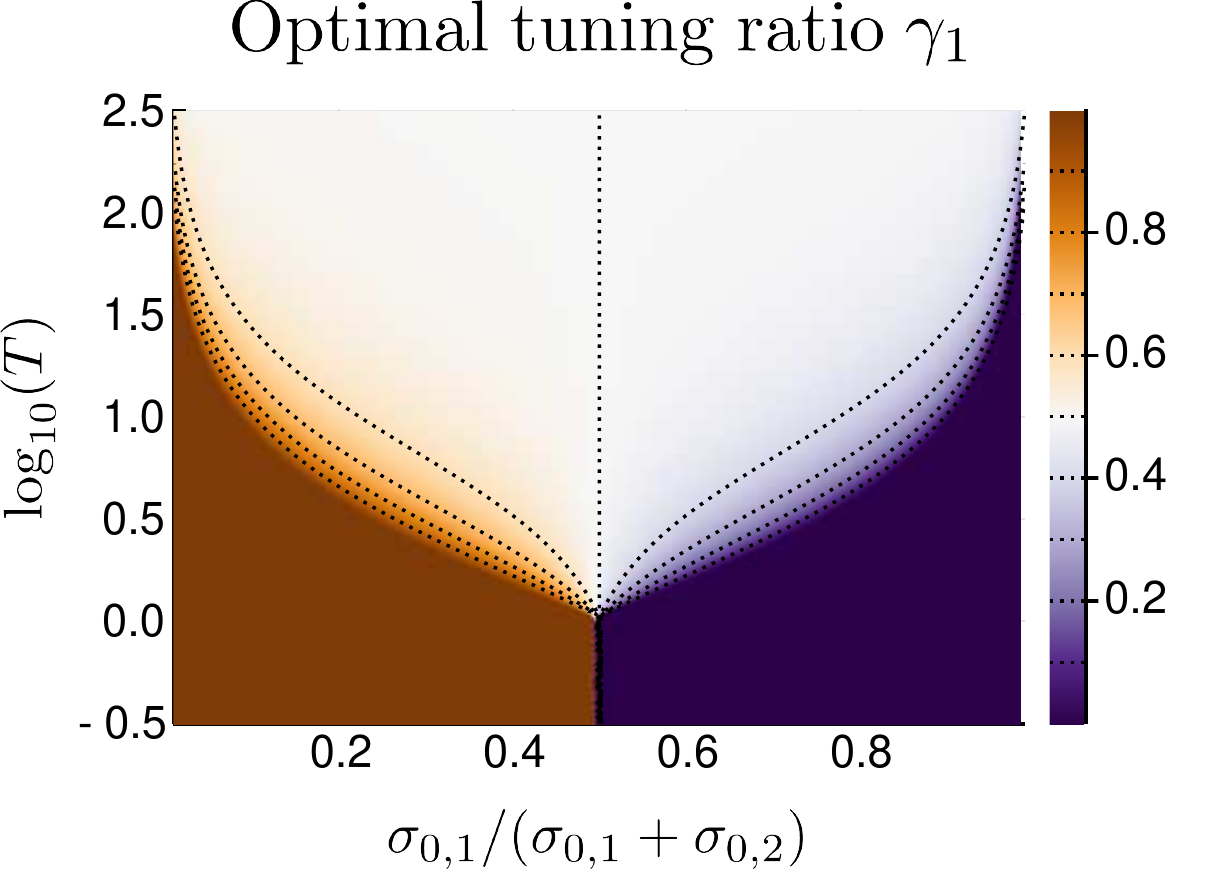}\hfill{}\includegraphics[width=0.5\columnwidth]{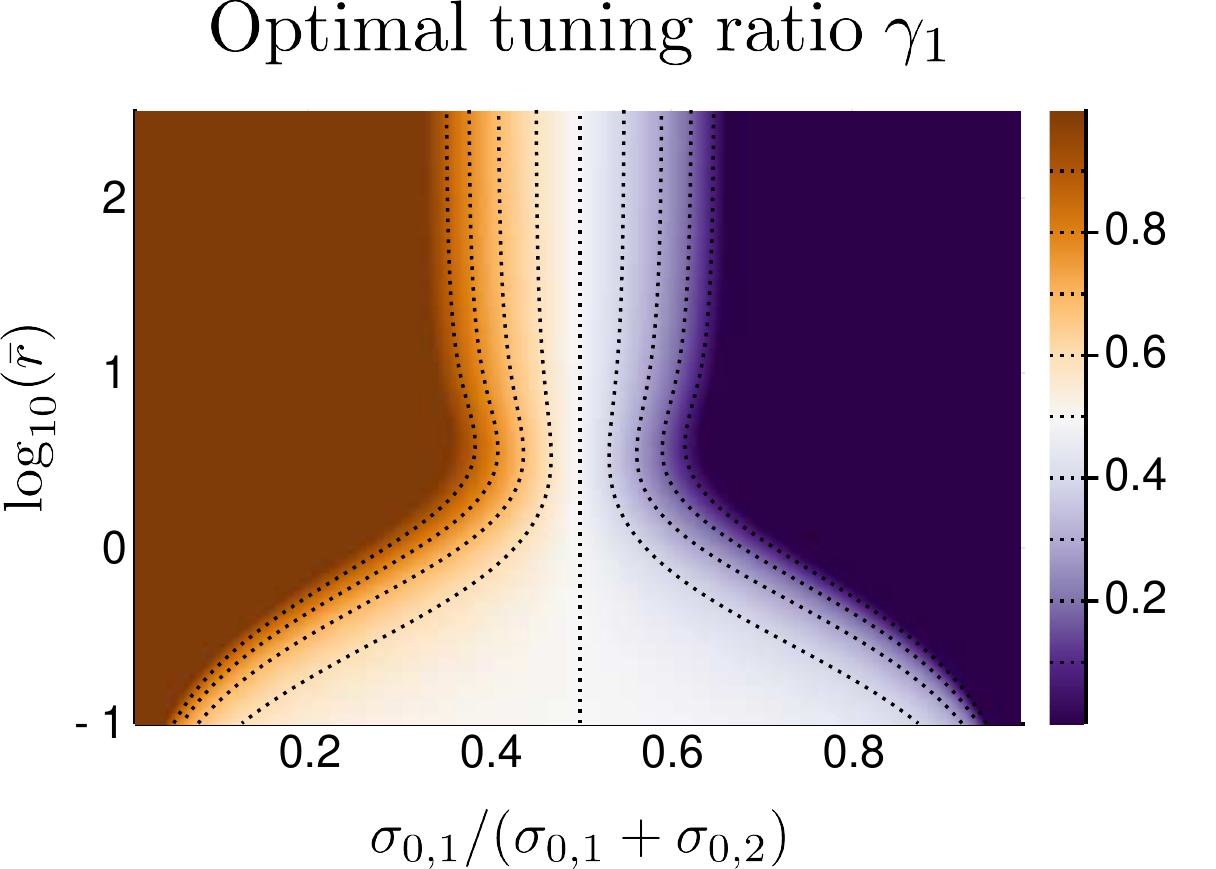}\caption{MMSE-optimal tuning width ratio $\gamma_{1}$ as a function of prior
ratio $\sigma_{0,1}/\left(\sigma_{0,1}+\sigma_{0,2}\right)$ and encoding
time (left) or population rate constraint (right), with the prior
variance determinant $\det\Sigma_{0}=\sigma_{0,1}^{2}\sigma_{0,2}^{2}$
fixed. Parameters on the left are $\bar{r}=10,2\pi\bar{h}=1$, and
on the right $T=1,2\pi\bar{h}=2$.}
\label{fig:encoding-opt}
\end{figure}

We compare our results to those obtained in \citep{Finkelstein2018}
for a model of neural encoding of head-direction in bats. The encoding
model in \citep{Finkelstein2018} includes neurons tuned to either
one or both of two angles -- azimuth or pitch. Preferred stimuli
are drawn from a uniform distribution on the resulting torus. The
estimation error of the ML estimator is compared in \emph{conjunctive
}(two-dimensional) coding, where each neuron encodes both directions,
and \emph{pure} (one-dimensional) coding, where each neuron encodes
one of the dimensions. Their analysis predicts that in a large population,
conjunctive coding always acheives lower error than pure coding, and
this effect is stronger for shorter decoding time. Our finding that
optimal coding is more symmetric for longer decoding time is \emph{the
opposite }of the conclusion in \citep{Finkelstein2018}. This discrepancy
may be attributed to the use of ML estimation as opposed to MMSE,
or to several differences between the models, as explored in more
detail in appendix \ref{sec:Pure-and-conjunctive}.

\section{Discussion}

We have studied MMSE-optimal multivariate encoding for infinite uniform
Gaussian neural population with population-level and neuron-level
rate constraints. This formulation allows for closed-form evaluation
of MMSE in the multivariate case (eqs. (\ref{eq:mmse-uniform-gaussian}),(\ref{eq:mmse-uniform-gaussian-diag}))
as well as simple lower and upper bounds (\ref{eq:mmse-bound}), and
the derivation of analytic results regarding optimal tuning. Specifically,
we have shown analytically that MMSE-optimal tuning functions in this
model are aligned with the principal components of the prior distribution,
and tuning is narrower in dimensions with larger prior variance. These
analytic results allow for a more computationally efficient numerical
optimization of encoding parameters, which we have used to evaluate
the two-dimensional case in more detail. We have found that encoding
only the dimension with higher variance is optimal for short encoding
times (see Figure \ref{fig:encoding-time}). We have also shown analytically
that MSE-optimal multivariate encoding in our model always involves
maximizing population firing rate, in contrast to the univariate case.
This suggest that the population-level constraint is especially relevant
in the multivariate setting. We have observed that optimal encoding
depends in a non-trivial way on this rate constraint (see Figure \ref{fig:encoding-rate}).

Furthermore, the analytic tractability of the approach allows us to
establish precise mathematical results, and to reach conceptually
novel and intuitively interpretable conclusions that may be applicable
to broader settings and to experimental data. Unfortunately, to the
best of our knowledge, there are currently few experimental results
on multivariate tuning properties that can be directly tested, but
our results make concrete predictions (under admittedly restricted
assumptions), when these become available.

Our framework differs from many previous studies in the direct optimization
of decoding MMSE rather than proxies such as Fisher information or
maximum likelihood estimation MSE, as well as in the explicit incorporation
of population-level rate constraints. We have found that optimization
of these proxies yields misleading results for short encoding times
-- as also observed in several previous works \citep{Bethge2002,YaeMei10,Pilarski2015}.
On the other hand, direct optimization of MMSE restricts the class
of models where the objective function can be readily evaluated, since
optimal decoding is generally intractable. To achieve mathematical
tractability in the multivariate case, we have focused on the case
of a Gaussian prior and Gaussian tuning functions that uniformly cover
an unbounded stimulus space. Note that in this energy-constrained
model, tuning width is also constrained, so that the effective space
of preferred stimuli which are likely to fire under the Gaussian prior
distribution of the state is bounded. Therefore, the unbounded space
of preferred stimuli used in our framework does not rule out its applicability
to biological settings with bounded stimulus space.

In the \emph{univariate }case with a single neuron, closed-form MMSE
has been previously derived in \citep{bethge_optimal_2003} for a
wide class of tuning functions, under the assumption of uniform prior.
Our work, in contrast, focuses on the case of uniform Gaussian tuning
and Gaussian prior. Although this is a more restricted class of tuning
functions, it facilitates analysis of the multivariate case. As far
as we are aware, this work is the first that derives closed-form decoding
MMSE for a multivariate stimulus. This suggests a possible direction
for future research in the convergence of these approaches to study
multivariate MMSE-optimal encoding for a wider class of tuning functions.

\section*{Acknowledgments}

We are grateful to Johnatan Aljadeff and Omri Barak for helpful discussions.
This work is partially supported by grant No. 451/17 from the Israel
Science Foundation, and by the Ollendorff center of the Viterbi Faculty
of Electrical Engineering at the Technion.

\subsection*{}

\printbibliography

@book{andrews2000special,
  title={Special functions},
  author={Andrews, George E and Askey, Richard and Roy, Ranjan},
  volume={71},
  year={2000},
  publisher={Cambridge university press},
  Series={Encyclopedia of Mathematics and its Applications}
}

@article{	Benucci2009,
	title = {Coding of stimulus sequences by population responses in visual cortex.},
	volume = {12},
	issn = {1546-1726 (Electronic){\textbackslash}r1097-6256 (Linking)},
	COMMENT_url = {http://dx.doi.org/10.1038/nn.2398},
	doi = {10.1038/nn.2398},
	abstract = {Neuronal populations in sensory cortex represent time-changing sensory input through a spatiotemporal code. What are the rules that govern this code? We measured membrane potentials and spikes from neuronal populations in cat visual cortex (V1) using voltage-sensitive dyes and electrode arrays. We first characterized the population response to a single orientation. As response amplitude grew, the population tuning width remained constant for membrane potential responses and became progressively sharper for spike responses. We then asked how these single-orientation responses combine to code for successive orientations. We found that they combined through simple linear summation. Linearity, however, was violated after stimulus offset, when responses exhibited an unexplained persistence. As a result of linearity, the interactions between responses to successive stimuli were minimal. Our results indicate that higher cortical areas may reconstruct the stimulus sequence from V1 population responses through a simple instantaneous decoder. Therefore, spatial and temporal codes in area V1 operate largely independently.},
	number = {10},
	journal = {Nature neuroscience},
	author = {Benucci, Andrea and Ringach, Dario L and Carandini, Matteo},
	year = {2009},
	pages = {1317--1324}
}

@article{	Berkes2011Spontaneous,
	title = {Spontaneous cortical activity reveals hallmarks of an optimal internal model of the environment.},
	volume = {331},
	doi = {10.1126/science.1195870},
	abstract = {The brain maintains internal models of its environment to interpret sensory inputs and to prepare actions. Although behavioral studies have demonstrated that these internal models are optimally adapted to the statistics of the environment, the neural underpinning of this adaptation is unknown. Using a Bayesian model of sensory cortical processing, we related stimulus-evoked and spontaneous neural activities to inferences and prior expectations in an internal model and predicted that they should match if the model is statistically optimal. To test this prediction, we analyzed visual cortical activity of awake ferrets during development. Similarity between spontaneous and evoked activities increased with age and was specific to responses evoked by natural scenes. This demonstrates the progressive adaptation of internal models to the statistics of natural stimuli at the neural level.},
	number = {6013},
	journal = {Science (New York, N.Y.)},
	author = {Berkes, Pietro and Orb\'{a}n, Gergo and Lengyel, M\'{a}t\'{e} and Fiser, J\'{o}zsef},
	month = jan,
	year = {2011},
	keywords = {Photic Stimulation, Animals, Neurons, Neurons: physiology, Action Potentials, Bayes Theorem, Models, Neurological, Electrodes, Adaptation, Aging, Darkness, Evoked Potentials, Ferrets, Implanted, Physiological, Visual, Visual Cortex, Visual Cortex: growth \& development, Visual Cortex: physiology, Visual Perception},
	pages = {83--7},
	file = {Berkes et al., 2011, Spontaneous cortical activity reveals hallmarks of an optimal internal model of the environment:C\:\\Users\\yharel\\Zotero\\storage\\FTX5YQL7\\Berkes et al., 2011, Spontaneous cortical activity reveals hallmarks of an optimal internal model of the environment.pdf:application/pdf}
}

@Article{	  Bethge2002,
  author	= {Matthias Bethge and David Rotermund and Klaus Pawelzik},
  journal	= {Neural Computation},
  pages		= {2317--2351},
  title		= {Optimal short-term population coding: when Fisher information fails.},
  volume	= {14},
  year		= {2002},
  number	= {10},
  doi		= {10.1162/08997660260293247},
  institution	= {{Institute of Theoretical Physics, University of Bremen,
		  Bremen, D-28334 Germany.}},
  keywords	= {Artifacts; Models, Neurological; Monte Carlo Method;
		  Neurons, physiology},
  pmid		= {12396565},
}

@inproceedings{bethge_binary_2003,
	title = {Binary Tuning is Optimal for Neural Rate Coding with High Temporal Resolution},
	eventtitle = {Advances in Neural Information Processing Systems},
	author = {Bethge, Matthias and Rotermund, David and Pawelzik, Klaus},
	date = {2003},
}

@article{bethge_optimal_2003,
	title = {Optimal neural rate coding leads to bimodal firing rate distributions},
	volume = {14},
	doi = {10.1088/0954-898X_14_2_307},
	abstract = {Many experimental studies concerning the neuronal code are based on graded responses of neurons, given by the emitted number of spikes measured in a certain time window. Correspondingly, a large body of neural network theory deals with analogue neuron models and discusses their potential use for computation or function approximation. All physical signals, however, are of limited precision, and neuronal ﬁring rates in cortex are relatively low. Here, we investigate the relevance of analogue signal processing with spikes in terms of optimal stimulus reconstruction and information theory. In particular, we derive optimal tuning functions taking the biological constraint of limited ﬁring rates into account. It turns out that depending on the available decoding time T , optimal encoding undergoes a phase transition from discrete binary coding for small T towards analogue or quasi-analogue encoding for large T . The corresponding ﬁring rate distributions are bimodal for all relevant T , in particular in the case of population coding.},
	pages = {303--319},
	number = {2},
	journaltitle = {Network: Computation in Neural Systems},
	shortjournal = {Network: Computation in Neural Systems},
	author = {Bethge, Matthias and Rotermund, David and Pawelzik, Klaus},
	date = {2003-01},
	langid = {english},
}

@article{bethge_second_2003,
	title = {Second Order Phase Transition in Neural Rate Coding: Binary Encoding is Optimal for Rapid Signal Transmission},
	volume = {90},
	issn = {0031-9007, 1079-7114},
	doi = {10.1103/PhysRevLett.90.088104},
	shorttitle = {Second Order Phase Transition in Neural Rate Coding},
	pages = {088104},
	number = {8},
	journaltitle = {Physical Review Letters},
	shortjournal = {Phys. Rev. Lett.},
	author = {Bethge, Matthias and Rotermund, David and Pawelzik, Klaus},
	date = {2003-02-27},
	langid = {english},
}

@article{	Darainy2018,
	title = {Neural {Basis} of {Sensorimotor} {Plasticity} in {Speech} {Motor} {Adaptation}},
	abstract = {When we speak, we get correlated sensory feedback from speech sounds and from the muscles and soft tissues of the vocal tract. Here we dissociate the contributions of auditory and somatosensory feedback to identify brain networks that underlie the somatic contribution to speech motor learning. The technique uses a robotic device that selectively alters somatosensory inputs in combination with resting-state fMRI scans that reveal learning-related changes in functional connectivity. A partial correlation analysis is used to identify connectivity changes that are not explained by the time course of activity in any other learning-related areas. This analysis revealed changes related to behavioral improvements in movement and separately, to changes in auditory perception: Speech motor adaptation itself was associated with connectivity changes that were primarily in non-motor areas of brain, speciﬁcally, to a strengthening of connectivity between auditory and somatosensory cortex and between presupplementary motor area and the inferior parietal lobule. In contrast, connectively changes associated with alterations to auditory perception were restricted to speech motor areas, speciﬁcally, primary motor cortex and inferior frontal gyrus. Overall, our ﬁndings show that during adaptation, somatosensory inputs result in a broad range of changes in connectivity in areas associated with speech motor control and learning.},
	language = {en},
	journal = {Cerebral Cortex},
	author = {Darainy, Mohammad and Vahdat, Shahabeddin and Ostry, David J},
	year = {2018},
	pages = {14},
}

@Book{		  DayAbb05,
  title		= {Theoretical Neuroscience: Computational and Mathematical
		  Modeling of Neural Systems},
  author	= {P. Dayan and L.F. Abbott},
  publisher	= {MIT Press},
  year		= {2005},
}

@article{	Dean2008Rapid,
	title = {Rapid {Neural} {Adaptation} to {Sound} {Level} {Statistics}},
	volume = {28},
	issn = {0270-6474, 1529-2401},
	doi = {10.1523/JNEUROSCI.0470-08.2008},
	language = {en},
	number = {25},
	journal = {Journal of Neuroscience},
	author = {Dean, I. and Robinson, B. L. and Harper, N. S. and McAlpine, D.},
	month = jun,
	year = {2008},
	pages = {6430--6438},
}

@article{	Eurich2000,
	abstract = {Neural responses in sensory systems are typically triggered by a multitude of stimulus features. Using information theory, we study the encoding accuracy of a population of stochastically spiking neurons characterized by different tuning widths for the different features. The optimal encoding strategy for representing one feature most accurately consists of narrow tuning in the dimension to be encoded, to increase the single-neuron Fisher information, and broad tuning in all other dimensions, to increase the number of active neurons. Extremely narrow tuning without sufficient receptive field overlap will severely worsen the coding. This implies the existence of an optimal tuning width for the feature to be encoded. Empirically, only a subset of all stimulus features will normally be accessible. In this case, relative encoding errors can be calculated that yield a criterion for the function of a neural population based on the measured tuning curves.},
	author = {Eurich, Christian W. and Wilke, Stefan D.},
	doi = {10.1162/089976600300015240},
	issn = {08997667},
	journal = {Neural Computation},
	number = {7},
	pages = {1519--1529},
	title = {{Multidimensional encoding strategy of spiking neurons}},
	volume = {12},
	year = {2000}
}

@Article{	Finkelstein2018,
	abstract = {Ethologically relevant stimuli are often multidimensional. In many brain systems, neurons with "pure" tuning to one stimulus dimension are found along with "conjunctive" neurons that encode several dimensions, forming an apparently redundant representation. Here we show using theoretical analysis that a mixed-dimensionality code can efficiently represent a stimulus in different behavioral regimes: encoding by conjunctive cells is more robust when the stimulus changes quickly, whereas on long timescales pure cells represent the stimulus more efficiently with fewer neurons. We tested our predictions experimentally in the bat head-direction system and found that many head-direction cells switched their tuning dynamically from pure to conjunctive representation as a function of angular velocity-confirming our theoretical prediction. More broadly, our results suggest that optimal dimensionality depends on population size and on the time available for decoding-which might explain why mixed-dimensionality representations are common in sensory, motor, and higher cognitive systems across species.},
	author = {Finkelstein, Arseny and Ulanovsky, Nachum and Tsodyks, Misha and Aljadeff, Johnatan},
	doi = {10.1038/s41467-018-05562-1},
	isbn = {4146701805},
	issn = {20411723},
	journal = {Nature Communications},
	number = {1},
	pmid = {30181554},
	title = {{Optimal dynamic coding by mixed-dimensionality neurons in the head-direction system of bats}},
	volume = {9},
	year = {2018}
}

@Article{	  Ganguli2011,
  archiveprefix	= {arXiv},
  arxivid	= {arXiv:1209.5006v1},
  author	= {Ganguli, D. and Simoncelli, E.P.},
  eprint	= {arXiv:1209.5006v1},
  isbn		= {9781617823800},
  journal	= {Advances in neural information processing systems},
  volume	= {December 2010},
  pages		= {6--9},
  title		= {{Implicit encoding of prior probabilities in optimal
		  neural populations}},
  COMMENT_url		= {http://www.cns.nyu.edu/pub/lcv/ganguli10c-preprint.pdf},
  year		= {2011}
}

@article{	Gardner2019,
	title = {Optimality and heuristics in perceptual neuroscience},
	issn = {1097-6256, 1546-1726},
	doi = {10.1038/s41593-019-0340-4},
	language = {en},
	journal = {Nature Neuroscience},
	author = {Gardner, Justin L.},
	month = feb,
	year = {2019},
}

@article{	Harel2018,
  title={Optimal decoding of dynamic stimuli by heterogeneous populations of spiking neurons: A closed-form approximation},
  author={Harel, Yuval and Meir, Ron and Opper, Manfred},
  journal={Neural computation},
  volume={30},
  number={8},
  pages={2056--2112},
  year={2018},
  publisher={MIT Press}
}

@Article{	  HarMcAlp04,
  title		= {Optimal neural population coding of an auditory spatial
		  cue.},
  author	= {N.S. Harper and D. McAlpine},
  journal	= {Nature},
  note		= {n1397b},
  number	= {7000},
  pages		= {682--686},
  volume	= {430},
  year		= {2004},
  doi		= {10.1038/nature02768},
  institution	= {{Department of Physiology and UCL Ear Institute, University
		  College London, London WC1E 6BT, UK.}},
  keywords	= {Acoustic Stimulation; Action Potentials; Animals; Auditory
		  Perception, physiology; Cats; Cues; Electrophysiology;
		  Female; Gerbillinae, physiology; Head, anatomy /&/
		  histology/physiology; Humans; Male; Models, Neurological;
		  Neurons, physiology; Probability; Species Specificity;
		  Stochastic Processes; Strigiformes, physiology},
  pii		= {nature02768},
  pmid		= {15295602},
}

@article{	Luke1972inequalities,
  title={Inequalities for generalized hypergeometric functions},
  author={Luke, Yudell L},
  journal={Journal of Approximation Theory},
  volume={5},
  number={1},
  pages={41--65},
  year={1972},
  publisher={Elsevier}
}

@article{	Petersen2004matrix,
  title={The matrix cookbook},
  author={Petersen, Kaare Brandt and Michael, Syskind Pedersen},
  year={2004}
}

@article{Pilarski2015,
	abstract = {Neuronal systems exhibit impressive capabilities in decision making and action coordination by employing the encoded information about both external and internal environments. Despite the tremendous effort of neuroscientists, the exact nature of the neuronal code remains elusive. Various experimental and theoretical techniques have been used to resolve the question in recent decades, with methods of signal estimation and detection theory playing an important part. In this paper we review the particular approach which relies on the concepts of Fisher information and Cram??r-Rao bound. These concepts essentially investigate the neuronal coding problem by addressing the theoretical limits on the decoding precision, be it in single neurons or in their populations. Despite the success of this approach in many instances, the underlying mathematical theory is not free of certain restrictive assumptions which might complicate the inference in some cases of interest. We recapitulate the assumptions and examine the practical extent of their validity.},
	author = {Pilarski, Stevan and Pokora, Ondrej},
	doi = {10.1016/j.biosystems.2015.07.009},
	issn = {18728324},
	journal = {BioSystems},
	keywords = {Cram{\'{e}}r-Rao admissibility,Fisher information,Neuronal coding},
	pages = {11--22},
	publisher = {Elsevier Ireland Ltd},
	title = {{On the Cram{\'{e}}r-Rao bound applicability and the role of Fisher information in computational neuroscience}},
	volume = {136},
	year = {2015}
}

@Article{	Snyder1977,
	title = {A separation theorem for stochastic control problems with point-process observations},
	volume = {13},
	number = {1},
	journal = {Automatica},
	author = {Snyder, DL and Rhodes, IB and Hoversten, EV},
	year = {1977},
	pages = {85--87}
}

@article{	Sun2017,
	author = {Sun, Wensheng and Barbour, Dennis L.},
	title = {{Rate, not selectivity, determines neuronal population coding accuracy in auditory cortex}},
	abstract = {The notion that neurons with higher selectivity carry more information about external sensory inputs is widely accepted in neuroscience. High-selectivity neurons respond to a narrow range of sensory inputs, and thus would be considered highly informative by rejecting a large proportion of possible inputs. In auditory cortex, neuronal responses are less selective immediately after the onset of a sound and then become highly selective in the following sustained response epoch. These 2 temporal response epochs have thus been interpreted to encode first the presence and then the content of a sound input. Contrary to predictions from that prevailing theory, however, we found that the neural population conveys similar information about sound input across the 2 epochs in spite of the neuronal selectivity differences. The amount of information encoded turns out to be almost completely dependent upon the total number of population spikes in the read-out window for this system. Moreover, inhomogeneous Poisson spiking behavior is sufficient to account for this property. These results imply a novel principle of sensory encoding that is potentially shared widely among multiple sensory systems.},
	doi = {10.1371/journal.pbio.2002459},
	isbn = {1111111111},
	issn = {15457885},
	journal = {PLoS Biology},
	number = {11},
	pages = {1--22},
	volume = {15},
	year = {2017}
}

@InCollection{	  SusMeiOpp11,
  title		= {Analytical Results for the Error in Filtering of Gaussian
		  Processes},
  author	= {Alex Susemihl and Ron Meir and Manfred Opper},
  booktitle	= {Advances in Neural Information Processing Systems 24},
  year		= {2011},
  editor	= {J. Shawe-Taylor and R.S. Zemel and P. Bartlett and F.C.N.
		  Pereira and K.Q. Weinberger},
  pages		= {2303--2311},
}

@Article{	  SusMeiOpp13,
  title		= {Dynamic state estimation based on Poisson spike
		  trains---towards a theory of optimal encoding},
  author	= {Alex Susemihl and Ron Meir and Manfred Opper},
  journal	= {Journal of Statistical Mechanics: Theory and Experiment},
  number	= {03},
  pages		= {P03009},
  volume	= {2013},
  year		= {2013},
  publisher	= {IOP Publishing},
}

@article{	Todorov2004Optimality,
	title = {Optimality principles in sensorimotor control},
	volume = {7},
	issn = {1097-6256, 1546-1726},
	url = {http://www.nature.com/articles/nn1309},
	doi = {10.1038/nn1309},
	language = {en},
	number = {9},
	urldate = {2019-11-24},
	journal = {Nature Neuroscience},
	author = {Todorov, Emanuel},
	month = sep,
	year = {2004},
	pages = {907--915},
}

@book{	VanTrees2004Detection,
  title={Detection, estimation, and modulation theory, part I: detection, estimation, and linear modulation theory},
  author={Van Trees, Harry L},
  year={2004},
  publisher={John Wiley \& Sons}
}

@article{	WanStoLee2016,
  title={Efficient neural codes that minimize lp reconstruction error},
  author={Wang, Zhuo and Stocker, Alan A and Lee, Daniel D},
  journal={Neural computation},
  year={2016},
  publisher={MIT Press}
}

@Article{	  YaeMei10,
  author	= {Steve Yaeli and Ron Meir},
  journal	= {Front Comput Neurosci},
  pages		= {130},
  title		= {Error-based analysis of optimal tuning functions explains
		  phenomena observed in sensory neurons.},
  volume	= {4},
  year		= {2010},
  doi		= {10.3389/fncom.2010.00130},
  institution	= {Department of Electrical Engineering Technion, Haifa,
		  Israel.},
  pmid		= {21079749},
}

@article{Zhang1999,
	author = {Zhang, Kechen and Sejnowski, Terrence J},
	journal = {Biological Cybernetics},
	pages = {75--84},
	title = {{Neuronal Tuning : To Sharpen or Broaden ? 1 Introduction 2 Scaling rule for tuning width}},
	volume = {84},
	year = {1999}
}

\appendix

\section{Derivations}

\subsection{ML estimator\label{subsec:app-ML}}

We have defined a modified ML estimator (\ref{eq:modified-ml}), which
equals the prior mean when there are no spikes, since in this case
the standard ML estimator is undefined under uniform coding. The MSE
of this estimator may be computed directly from the law of total variance,

\begin{align*}
\E\left[\hat{X}_{t}^{\mathrm{ML}}|X,N_{t}=k\right] & =\begin{cases}
X & k\neq0\\
\mu_{0} & k=0
\end{cases}\\
\Var\left[\hat{X}_{t}^{\mathrm{ML}}|X,N_{t}=k\right] & =\frac{1}{k^{2}}\Var\left[\sum_{i=1}^{k}\theta_{i}|X,N_{t}=k\right]=\frac{1}{k}R^{-1}\quad\left(k\neq0\right)
\end{align*}
\begin{align}
 & \E\left[\left(X-\hat{X}_{t}^{\mathrm{ML}}\right)\left(X-\hat{X}_{t}^{\mathrm{ML}}\right)\transpose\Big|X\right]\nonumber \\
 & =\E\left[\Var\left[\hat{X}_{t}^{\mathrm{ML}}|X,N_{t}\right]|X\right]+\E\left[\left(\E\left[\hat{X}_{t}^{\mathrm{ML}}|X,N_{t}\right]-X\right)\left(\E\left[\hat{X}_{t}^{\mathrm{ML}}|X,N_{t}\right]-X\right)\transpose\Big|X\right]\nonumber \\
 & =\E\left[\frac{\mathbf{1}\left\{ N_{t}>0\right\} }{N_{t}}\Big|X\right]R^{-1}+\E\left[\mathbf{1}\left\{ N_{t}=0\right\} \left(\mu_{0}-X\right)\left(\mu_{0}-X\right)\transpose\Big|X\right]\nonumber \\
 & =e^{-rt}\left(\sum_{k=1}^{\infty}\frac{\left(rt\right)^{k}}{k!\,k}R^{-1}+\left(\mu_{0}-X\right)\left(\mu_{0}-X\right)\transpose\right)\label{eq:ml-cond-err}
\end{align}
where $r$ is the population rate given by (\ref{eq:total-rate}).
In the diagonal case (\ref{eq:diagonal}), we obtain the ML estimator's
conditional error by taking the trace of (\ref{eq:ml-cond-err}),
\[
\epsilon\left(\hat{X}_{t}^{\mathrm{ML}}\big|X\right)=e^{-rt}\left[\sum_{k=1}^{\infty}\frac{\left(rt\right)^{k}}{k!\,k}\sum_{i=1}^{m}\alpha_{i}^{2}+\left|\mu_{0}-X\right|^{2}\right],
\]
which yields the MSE (\ref{eq:ml-mse}) after taking the expected
value.

\subsection{Proofs \label{sec:proofs}}

\qdecreasing*   
\begin{proof}
Taking derivatives of (\ref{eq:q}) term by term yields the partial
derivatives of $q\left(s,r\right)$, 
\begin{align*}
q_{s}\left(s,r\right) & =re^{-r}\sum_{k=0}^{\infty}\frac{r^{k}}{k!}\frac{1}{\left(s+k+1\right)^{2}},\\
q_{r}\left(s,r\right) & =-se^{-r}\sum_{k=0}^{\infty}\frac{r^{k}}{k!}\frac{1}{\left(s+k\right)\left(s+k+1\right)},
\end{align*}
from which a tedious but straightforward calculation yields
\begin{align*}
\frac{d}{d\alpha}q\left(c_{1}\alpha^{2},c_{2}\alpha^{m}\right) & =e^{-r}\sum_{k=0}^{\infty}\frac{r^{k}}{k!}\frac{c_{1}c_{2}\alpha^{m+1}}{\left(s+k+1\right)^{2}}\left(2-m-\frac{m}{s+k}\right),
\end{align*}
where $s=c_{1}\alpha^{2},r=c_{2}\alpha^{m}$. This derivative is clearly
negative for positive $\alpha,c_{1},c_{2}$ and $m\geq2$.
\end{proof}
\trInv*	

As noted in section \ref{subsec:MMSE-optimal-multivariate-tuning},
applying this result to the prior precision $Q=\Sigma_{0}^{-1}$,
the optimal tuning shape matrix $R=U\Lambda U\transpose$ is a diagonal
matrix with its diagonal elements in decreasing order. That is, tuning
is aligned to the principal components, with narrower tuning (larger
diagonal elements of $R$) for components with larger prior variance
(smaller diagonal elements of $Q$).
\begin{proof}
We first prove the claim for the case of distinct eigenvalues: $\lambda_{1}<\lambda_{2}<\cdots\lambda_{n}$
and $q_{1}<q_{2}<\cdots q_{n}$. We apply Lagrange multipliers to
find necessary conditions on $U$. Using the fact that $\left(Q+U\Lambda U\transpose\right)^{-1}$
is symmetric and the relations 
\begin{align*}
\frac{\partial}{\partial X_{ij}}\mathrm{tr}\left[\left(Q+X\right)^{-1}\right] & =-\left[\left(Q+X\right)^{-2}\right]_{ji}\\
\frac{\partial}{\partial U_{ij}}\left(UBU\transpose\right)_{kl} & =\left(J^{kl}UB\transpose+J^{lk}UB\right)_{ij}
\end{align*}
(e.g., \citealt[eqs. (64),(79)]{Petersen2004matrix}), where $J^{kl}$
is the single-entry matrix $J_{ij}^{kl}\coloneqq\delta_{ik}\delta_{jl}$,
we obtain
\[
\frac{\partial}{\partial u_{ij}}\mathrm{tr}\left[\left(Q+U\Lambda U\transpose\right)^{-1}\right]=-2\left[\left(Q+U\Lambda U\transpose\right)^{-2}U\Lambda\right]_{ij}
\]
The optimization constraints are $\sum_{s}u_{ks}u_{ls}=\delta_{kl}$
for $k,l\in\left\{ 1,\ldots n\right\} $. Differentiation of the $(k,l)$
constraint with respect to $u_{ij}$ yields

\[
\frac{\partial}{\partial u_{ij}}\left[\sum_{s}u_{ks}u_{ls}-\delta_{kl}\right]=\delta_{ik}u_{lj}+\delta_{il}u_{kj}=\left[J^{kl}U+J^{lk}U\right]_{ij}
\]
leading to the necessary condition
\[
0=-2\left(Q+U\Lambda U\transpose\right)^{-2}U\Lambda+\sum_{kl}\mu_{kl}\left(J^{kl}+J^{lk}\right)U,
\]
where $\mu_{kl}$ are Lagrange multipliers. Multiplying on the right
by $U\transpose$ and using the constraint $UU\transpose=I$ yields
\[
\left(Q+U\Lambda U\transpose\right)^{-2}U\Lambda U\transpose=\frac{1}{2}\sum_{kl}\mu_{kl}\left(J^{kl}+J^{lk}\right).
\]
The right-hand side is symmetric, therefore so is the left-hand side.

Let $R=U\Lambda U\transpose$. We have found that $\left(Q+R\right)^{-2}R$
is symmetric, therefore $R$ commutes with $\left(Q+R\right)^{-2}$,
and they share an orthogonal basis of eigenvectors. Since $R$'s eigenvalues
are distinct, this basis is unique up to sign changes, so $(Q+R)^{-2}$
is also diagonalized by $U$, meaning $\left(Q+R\right)^{-2}=U\tilde{\Lambda}U\transpose$
for some diagonal $\tilde{\Lambda}$. Now,
\begin{align*}
RQ & =U\Lambda\left(\tilde{\Lambda}^{-1/2}-\Lambda\right)U\transpose=U\left(\tilde{\Lambda}^{-1/2}-\Lambda\right)\Lambda U\transpose=QR
\end{align*}
so
\[
q_{i}r_{ij}=\left(QR\right)_{ij}=\left(RQ\right)_{ij}=r_{ij}q_{j}.
\]
and since $\left\{ q_{i}\right\} $ are distinct and non-zero, $R$
is diagonal, $R=\mathrm{diag}\left(r_{1},\ldots r_{n}\right)$.

Since $R$ is diagonal and shares $\Lambda$'s eigenvalues, its diagonal
is a permutation of $\Lambda$'s diagonal, i.e., $r_{i}=\lambda_{\pi\left(i\right)}$
for some permutation $\pi$ on $\left\{ 1,\ldots n\right\} $. Thus
\[
r_{ij}=\delta_{ij}\lambda_{\pi\left(i\right)}=\sum_{k}\delta_{k,\pi\left(j\right)}\delta_{k,\pi\left(i\right)}\lambda_{k}=\left(P_{\pi}\Lambda P_{\pi}\transpose\right)_{ij}
\]
where $P_{\pi}$ is the permutation matrix $\left(P_{\pi}\right)_{ij}=\delta_{\pi\left(i\right),j}$.
Since $U$ affects the objective function only through $R=U\Lambda U\transpose$,
any $U$ satisfying $U\Lambda U\transpose=P_{\pi}\Lambda P_{\pi}\transpose$
is optimal, and in particular, so is $U=P_{\pi}$ (and other optima
are obtained by changing signs in $U$, $u_{ij}=\pm\left(P_{\pi}\right)_{ij}$).

The objective at the optimum is therefore of the form
\[
\mathrm{tr}\left[\left(Q+R\right)^{-1}\right]=\sum_{i}\left(q_{i}+r_{i}\right)^{-1}=\sum_{i}\left(q_{i}+\lambda_{\pi\left(i\right)}\right)^{-1}.
\]
To show that this is minimized by the inversion permutation $\pi\left(i\right)=n+1-i$,
assume to the contrary that there are $i<j$ such that $\pi\left(i\right)<\pi\left(j\right)$.
Denote by $\pi'$ the permutation obtained from $\pi$ by switching
these two elements,
\[
\pi'\left(k\right)=\begin{cases}
\pi\left(j\right) & k=i\\
\pi\left(i\right) & k=j\\
\pi\left(k\right) & k\neq i,j
\end{cases}
\]
Since $i<j,\pi\left(i\right)<\pi\left(j\right)$ we have $q_{i}<q_{j}$
and $\lambda_{\pi\left(i\right)}<\lambda_{\pi\left(j\right)}$, therefore
\begin{multline*}
\sum_{k}\left(q_{k}+\lambda_{\pi\left(k\right)}\right)^{-1}-\sum_{k}\left(q_{k}+\lambda_{\pi'\left(k\right)}\right)^{-1}\\
=\left(\frac{1}{q_{i}+\lambda_{\pi\left(i\right)}}+\frac{1}{q_{j}+\lambda_{\pi\left(j\right)}}\right)-\left(\frac{1}{q_{i}+\lambda_{\pi\left(j\right)}}+\frac{1}{q_{j}+\lambda_{\pi\left(i\right)}}\right)\\
=\frac{q_{i}+q_{j}+\lambda_{\pi\left(i\right)}+\lambda_{\pi\left(j\right)}}{\left(q_{i}+\lambda_{\pi\left(i\right)}\right)\left(q_{j}+\lambda_{\pi\left(j\right)}\right)\left(q_{i}+\lambda_{\pi\left(j\right)}\right)\left(q_{j}+\lambda_{\pi\left(i\right)}\right)}\left(q_{j}-q_{i}\right)\left(\lambda_{\pi\left(j\right)}-\lambda_{\pi\left(i\right)}\right)>0,
\end{multline*}
contradicting the assumption that $P_{\pi}$ is optimal.

This concludes the proof for the case where all eigenvalues are distinct.
The result is easily extended to the case of multiple eigenvalues
by the continuity of the objective function: Denote the objective
function for eigenvalues $\boldsymbol{q}=(q_{i})_{i=1}^{n},\boldsymbol{\lambda}=(\lambda_{i})_{i=1}^{n}$
by 
\[
f_{\boldsymbol{q},\boldsymbol{\lambda}}\left(U\right)\coloneqq\mathrm{tr}\left[\left(\mathrm{diag}\left(\boldsymbol{q}\right)+U\mathrm{diag}\left(\boldsymbol{\lambda}\right)U\transpose\right)^{-1}\right].
\]
Let $U^{*}$ denote the anti-diagonal $U$, which is optimal when
eignevalues are distinct, and assume $U^{*}$ is not optimal for some
\textbf{$\boldsymbol{q}_{0},\boldsymbol{\lambda}_{0}$} (with possibly
non-distinct elements), i.e., there is an orthogonal matrix $U'$
such that $f_{\boldsymbol{q}_{0},\boldsymbol{\lambda}_{0}}\left(U'\right)<f_{\boldsymbol{q}_{0},\boldsymbol{\lambda}_{0}}\left(U^{*}\right)$.
Since $f$ depends continuously on \textbf{$\boldsymbol{q},\boldsymbol{\lambda}$},
this inequality holds in a neighborhood of \textbf{$\boldsymbol{q}_{0},\boldsymbol{\lambda}_{0}$},
which includes \textbf{$\boldsymbol{q},\boldsymbol{\lambda}$ }with
distinct elements, contradicting the optimality result above.
\end{proof}

\subsection{Fisher information of an infinite population\label{sec:Fisher-infinite}}

We derive Fisher information for an infinite neural population described
by a marked point process with rate-density $\lambda\left(x;\theta\right)$,
without assumptions on the form of $\lambda$, so the population might
have non-Gaussian tuning and might be non-uniform.

Given an infinite population with rate-density $\lambda\left(x;\theta\right)$,
the likelihood of a spike sequence $\left(t_{k},\theta_{k}\right)_{k=1}^{N_{T}}$
is given by 
\[
p\left(\left(t_{k},\theta_{k}\right)_{k=1}^{N_{T}}|X=x\right)=e^{-r\left(x\right)T}\prod_{k=1}^{N_{T}}\lambda\left(x;\theta_{k}\right),
\]
where $r\left(x\right)\coloneqq\int\lambda\left(x;\theta\right)d\theta$
is the total population rate in response to $x$ (equation (\ref{eq:likelihood})
is the special case of a uniform population, where the population
rate $r\left(x\right)$ is constant). Denoting the log-likelihood
by $L_{t}\left(x\right)$, 
\begin{align*}
L_{t}\left(x\right) & \coloneqq\log p\left(\left(t_{k},\theta_{k}\right)_{k=1}^{N_{T}}|X=x\right)=-r\left(x\right)T+\sum_{k=1}^{N_{T}}\log\lambda\left(x;\theta_{k}\right)\\
 & =-r\left(x\right)T+\int_{0}^{T}\int_{\mathbb{R}^{m}}\log\lambda\left(x;\theta\right)N\left(dt,d\theta\right).
\end{align*}
Using the shorthand $\partial_{i}\coloneqq\partial/\partial x_{i}$,
Fisher information is given by 
\begin{align*}
J_{ij}\left(x\right) & =-\E\left[\partial_{i}\partial_{j}L_{t}\left(x\right)|X=x\right]\\
 & =-\E\left[-\partial_{i}\partial_{j}r\left(x\right)+\int_{0}^{T}\int_{\mathbb{R}^{m}}\left(\partial_{i}\partial_{j}\log\lambda\left(x;\theta\right)\right)N\left(dt,d\theta\right)|X=x\right]\\
 & =T\partial_{i}\partial_{j}r\left(x\right)-T\int\left(\partial_{i}\partial_{j}\log\lambda\left(x;\theta\right)\right)\lambda\left(x;\theta\right)d\theta\\
 & =T\partial_{i}\partial_{j}r\left(x\right)-T\int\left(\partial_{i}\partial_{j}\lambda\left(x;\theta\right)-\frac{\left(\partial_{i}\lambda\left(x;\theta\right)\right)\left(\partial_{j}\lambda\left(x;\theta\right)\right)}{\lambda\left(x;\theta\right)}\right)d\theta\\
 & =T\int\frac{\left(\partial_{i}\lambda\left(x;\theta\right)\right)\left(\partial_{j}\lambda\left(x;\theta\right)\right)}{\lambda\left(x;\theta\right)}d\theta.
\end{align*}

\section{Multiple sub-populations}

To allow a more direct comparison to the analysis of \citep{Finkelstein2018},
we extend our model in two ways. First, we consider $M$ uniform sub-population,
each with its own maximal rate-density $h_{i}$ and tuning shape $R_{i}$.
Second, the tuning in the $i$th sub-population takes the form 
\begin{equation}
\lambda_{i}\left(x;\theta\right)\coloneqq h_{i}\exp\left(-\frac{1}{2}\left\Vert H_{i}x-\theta\right\Vert _{R_{i}}^{2}\right),\label{eq:gauss-tc-H}
\end{equation}
where the state dimensionality is now denoted $n$, and $H_{i}\in\mathbb{R}^{m_{i}\times n}$
is a matrix of full row rank ($m_{i}\leq n$), so that the sub-population
encodes the linear projection $H_{i}x\in\mathbb{R}^{m_{i}}$ of the
$n$-dimensional state to an $m_{i}$-dimensional subspace.

The derivation of the closed-form MMSE (\ref{eq:mmse-uniform-gaussian})
can be readily extended to this case. The posterior Gaussian distribution
has mean and variance\footnote{the formulation of \citep[Theorem 1]{Snyder1977}, from which we have
deduced (\ref{eq:posterior}) above, includes a projection matrix
$H$. The extension to multiple sub-populations is straightforward.}
\begin{align}
\mu_{T} & =\Sigma_{T}\left(\Sigma_{0}^{-1}\mu_{0}+\sum_{j=1}^{M}H_{j}\transpose R_{j}H_{j}\sum_{i=1}^{N_{T}^{\left(j\right)}}\theta_{i}^{\left(j\right)}\right),\label{eq:post-mean-subpops}\\
\Sigma_{T} & =\left(\Sigma_{0}^{-1}+\sum_{j=1}^{M}N_{T}^{\left(j\right)}H_{j}\transpose R_{j}H_{j}\right)^{-1},\label{eq:post-precision-subpops}
\end{align}
where $\theta_{i}^{\left(j\right)}$ is the preferred stimulus of
the neuron firing the $i$-th spike from that population, and $N_{T}^{\left(j\right)}$
the number of spikes observed from the $j$-th population in the time
interval $\left[0,T\right]$. Since $N_{T}^{\left(j\right)}$ are
independent Poisson variables with rate
\[
r_{j}=\sqrt{\frac{\left(2\pi\right)^{m_{j}}}{\det R_{j}}}h_{j},
\]
the MMSE takes the form
\begin{align}
\epsilon_{\mathrm{MMSE}} & =\E\,\mathrm{tr}\left[\left(\Sigma_{0}^{-1}+\sum_{j=1}^{M}N_{T}^{\left(j\right)}H_{j}\transpose R_{j}H_{j}\right)^{-1}\right]\label{eq:mmse-uniform-gaussian-mixture}\\
 & =\exp\left(-\sum_{j=1}^{M}r_{j}T\right)\sum_{k_{1},\ldots k_{M}=0}^{\infty}\left[\prod_{j=1}^{M}\frac{\left(r_{j}T\right)^{k_{j}}}{k_{j}!}\right]\mathrm{tr}\left(\left(\Sigma_{0}^{-1}+\sum_{j=1}^{M}k_{j}H_{j}\transpose R_{j}H_{j}\right){}^{-1}\right).
\end{align}

Fisher information in the $i$-th subpopulation is given by

\begin{equation}
J_{i}\left(x\right)=T\int\frac{\nabla_{x}\lambda_{i}\left(x;\theta\right)\nabla_{x}\lambda_{i}\left(x;\theta\right)\transpose}{\lambda_{i}\left(x;\theta\right)}d\theta=h'_{i}T\left|R_{i}\right|^{-1/2}H_{i}\transpose R_{i}H_{i}=r_{i}TH_{i}\transpose R_{i}H_{i}.\label{eq:uniform-gaussian-fisher-H}
\end{equation}
Since the subpopulations fire independently given the state $X$,
Fisher information of the entire population is given by the sum $J\left(x\right)=\sum_{i}J_{i}\left(x\right)$.

For comparison with \citep{Finkelstein2018}, we focus on the case
of diagonal prior variance, with each sub-population encoding one
of the dimensions\footnote{With the inclusion of the projection matrices $H_{j}$, which may
not align with principal components of the prior distribution, we
can no longer reduce the problem in general to the diagonal cases
of tuning aligned with principal components, as in Claim \ref{claim:tr-inv}.
Our motivation for focusing on the diagonal case, with projections
aligned to the axes, is allowing comparison to the results of \citet{Finkelstein2018}.
Note that in \citep{Finkelstein2018}, the prior is uniform, so that
all directions are principal components.}. Specifically, assume the prior variance is diagonal (\ref{eq:diagonal-prior}),
and the $j$-th subpopulation encodes only the $j$-th dimension,
with tuning width $\alpha_{j}$, i.e., $H_{j}\in\mathbb{R}^{1\times n}$
is a row vector with
\begin{equation}
\left[H_{j}\right]_{1i}=\delta_{ij},\quad R_{j}=\alpha_{j}^{-2}.\label{eq:pure-1d}
\end{equation}
The firing rate of the $j$-th subpopulation is 
\[
r_{j}=\sqrt{2\pi}\alpha h_{j},
\]
and the MMSE is given by
\begin{align}
\epsilon_{\mathrm{MMSE}} & =\E\,\mathrm{tr}\left[\left(\Sigma_{0}^{-1}+\mathrm{diag}\left(\alpha_{1}^{-2}N_{T}^{\left(1\right)},\ldots,\alpha_{M}^{-2}N_{T}^{\left(M\right)}\right)\right){}^{-1}\right]\nonumber \\
 & =\E\sum_{j=1}^{M}\frac{1}{\sigma_{0,j}^{-2}+\alpha_{j}^{-2}N_{T}^{\left(j\right)}}\nonumber \\
 & =\sum_{j=1}^{M}\sigma_{0,j}^{2}q\left(\frac{\alpha_{j}^{2}}{\sigma_{0,j}^{2}},r_{j}\right)\label{eq:mmse-uniform-gaussian-mixture-pure}
\end{align}
where $q$ is defined in (\ref{eq:q}). Fisher information (\ref{eq:uniform-gaussian-fisher-H})
under the setting (\ref{eq:pure-1d}) is given by 
\begin{align*}
J\left(x\right) & =\sum_{i=1}^{M}J_{i}\left(x\right)=\sum_{i=1}^{M}\frac{h'_{i}T}{\alpha_{i}}H_{i}\transpose H_{i}\\
 & =\diag\left(\frac{h_{1}'T}{\alpha_{1}},\ldots\frac{h_{M}'T}{\alpha_{M}}\right)=\diag\left(\frac{r_{1}T}{\alpha_{1}^{2}},\ldots\frac{r_{M}T}{\alpha_{M}^{2}}\right).
\end{align*}

The ML estimator is generally not unique for the model (\ref{eq:gauss-tc-H})
incorporating the projection matrix $H$. However, in the specific
case (\ref{eq:pure-1d}), it is uniquely defined when there is at
least one spike from each subpopulation. Analogously to the modified
ML estimator for a single population defined in section \ref{subsec:ML},
we define a modified ML estimator $\hat{X}_{t}^{\mathrm{ML}}=\left(\hat{X}_{t}^{1;\mathrm{ML}},\hat{X}_{t}^{2;\mathrm{ML}}\right),$
where the $i$-th dimension is estimated as

\begin{equation}
\hat{X}_{t}^{i;\mathrm{ML}}=\begin{cases}
\frac{1}{N_{t}^{\left(i\right)}}\sum_{k=1}^{N_{t}^{\left(i\right)}}\theta_{k}^{\left(i\right)} & N_{t}^{\left(i\right)}>0,\\
\mu_{0,i} & N_{t}^{\left(i\right)}=0.
\end{cases}\label{eq:modified-ml-subpops}
\end{equation}
As in the single population case, this modified ML estimator may be
obtained from the MMSE estimator (\ref{eq:post-mean-subpops}) in
the limit $\Sigma_{0}^{-1}\to0$, and its MSE may be computed from
the law of total variance,
\begin{align*}
\E\left[\hat{X}_{t}^{i;\mathrm{ML}}|X,N_{t}\right] & =\begin{cases}
X^{\left(i\right)}, & N_{t}^{\left(i\right)}>0,\\
\mu_{0,i}, & N_{t}^{\left(i\right)}=0,
\end{cases}\\
\mathrm{Var}\left[\hat{X}_{t}^{i;\mathrm{ML}}|X,N_{t}\right] & =\begin{cases}
\alpha_{i}^{2}/N_{t}^{\left(i\right)}, & N_{t}^{\left(i\right)}>0,\\
0, & N_{t}^{\left(i\right)}=0.
\end{cases}
\end{align*}
\begin{align}
 & \E\left[\left(X^{\left(i\right)}-\hat{X}_{t}^{i;\mathrm{ML}}\right)^{2}\Big|X\right]=\E\left[\Var\left[\hat{X}_{t}^{i;\mathrm{ML}}|X,N_{t}\right]|X\right]+\E\left[\left(\E\left[\hat{X}_{t}^{i;\mathrm{ML}}|X,N_{t}\right]-X^{\left(i\right)}\right)^{2}\Big|X\right]\nonumber \\
 & =\E\left[\frac{\mathbf{1}\left\{ N_{t}^{\left(i\right)}>0\right\} }{N_{t}^{\left(i\right)}}\Big|X\right]\alpha_{i}^{2}+\E\left[\mathbf{1}\left\{ N_{t}^{\left(i\right)}=0\right\} \left(\mu_{0,i}-X^{\left(i\right)}\right)^{2}\Big|X\right]\nonumber \\
 & =e^{-r_{i}t}\left(\sum_{k=1}^{\infty}\frac{\left(r_{i}t\right)^{k}}{k!\,k}\alpha_{i}^{2}+\left(\mu_{0,i}-X^{\left(i\right)}\right)^{2}\right).\label{eq:ml-cond-err-1}
\end{align}
Taking the expected value and summing over $i$ yields the MSE
\begin{equation}
\epsilon\left(\hat{X}_{t}^{\mathrm{ML}}\right)=\sum_{i=1}^{M}e^{-r_{i}t}\left[\left(\sum_{k=1}^{\infty}\frac{\left(r_{i}t\right)^{k}}{k!\,k}\right)\alpha_{i}^{2}+\sigma_{i}^{2}\right].\label{eq:ml-mse-subpops}
\end{equation}

\section{Pure and conjunctive coding of two-dimensional stimuli\label{sec:Pure-and-conjunctive}}

We compare our results to those obtained in \citep{Finkelstein2018}
for a model of neural encoding of head-direction in bats. As noted
in section \ref{subsec:Two-dimensional-uniform-Gaussian}, the encoding
model in \citep{Finkelstein2018} includes neurons tuned to either
one or both of two angles -- azimuth or pitch. Preferred stimuli
are drawn from a uniform distribution on the resulting torus. The
estimation error of the ML estimator is compared in \emph{conjunctive
}(two-dimensional) coding, where each neuron encodes both directions,
and \emph{pure} (one-dimensional) coding, where each neuron encodes
one of the dimensions. Their analysis predicts that in a large population,
conjunctive coding always acheives lower error than pure coding, and
this effect is stronger for shorter decoding time. This is in contrast
to our finding that optimal coding is more symmetric for longer decoding
time. This difference may be attributed to the use of ML estimation
rather than MMSE estimation, or to several other differences between
the models, as explained below.

To explore the difference between our results and those of \citep{Finkelstein2018},
we focus on the case of \emph{symmetric} prior and tuning, and compare
the MMSE of conjunctive and pure encoding. Specifically, we compute
the ratio of MMSE in conjunctive coding and MMSE in pure coding. We
consider several variations of our model to bring it closer to that
of \citep{Finkelstein2018}. Our model, as defined in the main text
and studied in the two-dimensional case in section \ref{subsec:Two-dimensional-uniform-Gaussian},
differs from \citep{Finkelstein2018} in several ways:
\begin{description}
\item [{Estimator}] Optimal MMSE estimation vs. (suboptimal) ML esimation
\item [{State~space}] Encoding of a state in Euclidean space vs. encoding
of angles
\item [{Number~of~subpopulations}] A single population vs. two pure sub-populations
\item [{Subpopulation~dimensionality}] In our model, populations have
preferred stimuli covering $\mathbb{R}^{2}$, but tuning curves may
be arbitrarily thin and long. In \citep{Finkelstein2018}, each \emph{pure
}populations has preferred stimuli covering a 1-d space and finite
tuning width.
\item [{Normalization}] Based on the derivation from firing rates constraints,
we have considered populations with fixed firing rates $h,r$ and
varying tuning widths. The model in \citep{Finkelstein2018} fixes
the total rate $r$ and the tuning width $\alpha$ for pure and conjunctive
encoding, resulting in different maximal firing rates per neuron between
the two cases.
\end{description}
Since our framework allows computation of MMSE only for Euclidean
state space, we do not address the problem of encoding of angles here.
We therefore consider several possible definitions of the \emph{pure
}pouplation:
\begin{itemize}
\item A single pure population encoding one of the dimensions vs. two pure
subpopulations each encoding one dimension
\item The pure subpopulations are defined in one of three ways:
\begin{enumerate}
\item 2-d pure populations ($\theta\in\mathbb{R}^{2}$) with same firing
rates $h,r$ as in conjunctive code. MMSE is evaluated in the infinitely
narrow limit $\alpha_{1}\to0,\alpha_{2}\to\infty$ or vice versa.
This is the setting studied in section \ref{subsec:Two-dimensional-uniform-Gaussian}.
\item 1-d pure populations ($\theta\in\mathbb{R}$) with same firing rates
$h,r$ as in conjunctive code. This setting allows finite population
rates with finite tuning width $\alpha$, which is dictated by the
rate constraints.
\item 1-d pure populations ($\theta\in\mathbb{R}$) with same width $\alpha$
and population rate $r$ as in conjunctive code. In this case, the
firing rate-density $h$ differs between pure and conjunctive coding.
\end{enumerate}
\end{itemize}
More precisely, in the last two cases, each sub-population has tuning
as given by (\ref{eq:gauss-tc-H}) and (\ref{eq:pure-1d}). When there
are two sub-populations, the firing rate $r$ that is equalized between
pure and conjunctive coding is the total rate population rate, so
that each sub-population has firing rate $r/2$. The errors for a
single pure sub-population are calculated from the same equations
((\ref{eq:mmse-uniform-gaussian-diag}) and (\ref{eq:mmse-uniform-gaussian-mixture-pure}))
by assigning rate $r$ to one sub-population and rate 0 to the other.

MMSE ratio for pure vs. conjunctive coding in each of these six models
as a function of encoding time is shown in Figure \ref{fig:pure-conj}.
Evidently, the preference for pure vs. conjunctive coding as a function
of encoding time depends on the details of the model. The model of
a single 2d pure population (top left) is the one appearing in the
main text. The model of two 1d sub-populations with the same tuning
width as the conjunctive population (bottom right) is the one most
similar to \citep{Finkelstein2018}. In both these cases, we find
that pure coding is relatively better for short decoding time, which
is the opposite of the conclusion in \citep{Finkelstein2018}.
\begin{center}
\begin{figure}
\begin{centering}
\begin{tabular}{>{\centering}m{1cm}>{\centering}m{0.4\columnwidth}>{\centering}m{0.4\columnwidth}}
 & 1 pure population & 2 pure populations\tabularnewline
\begin{turn}{90}
2d
\end{turn} & \includegraphics[width=0.4\columnwidth]{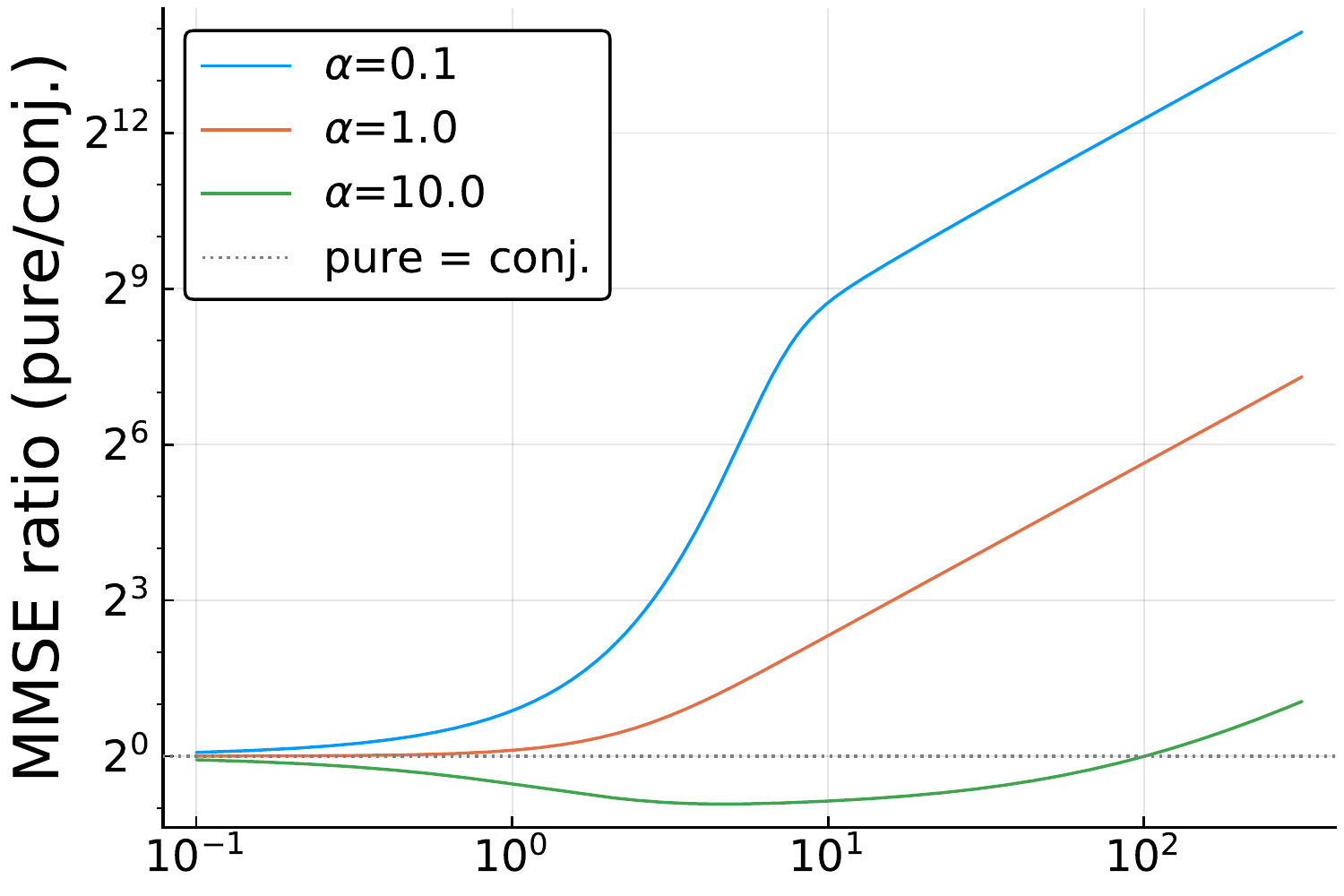} & \includegraphics[width=0.4\columnwidth]{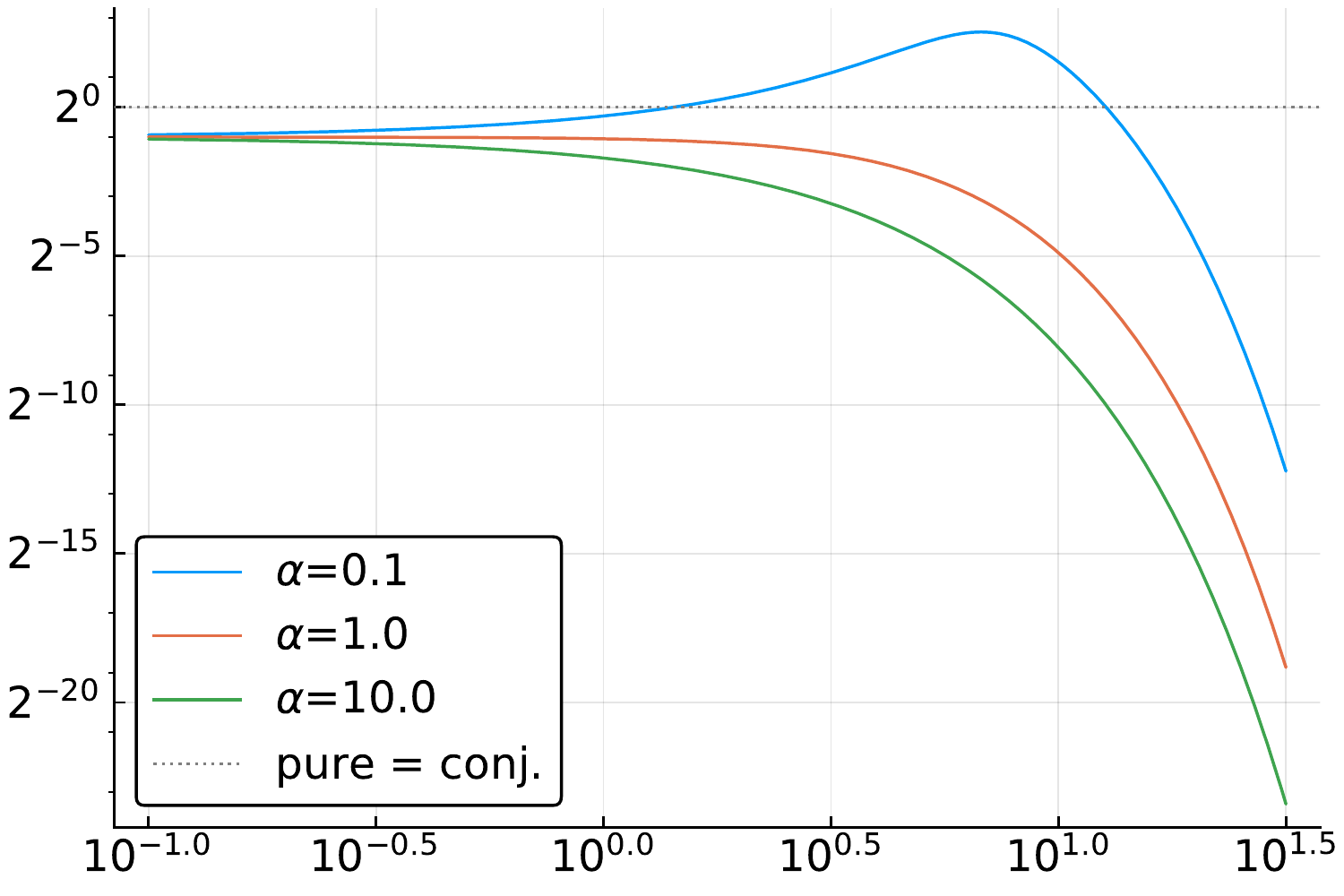}\tabularnewline
\begin{turn}{90}
1d, same rate-density $h$
\end{turn} & \includegraphics[width=0.4\columnwidth]{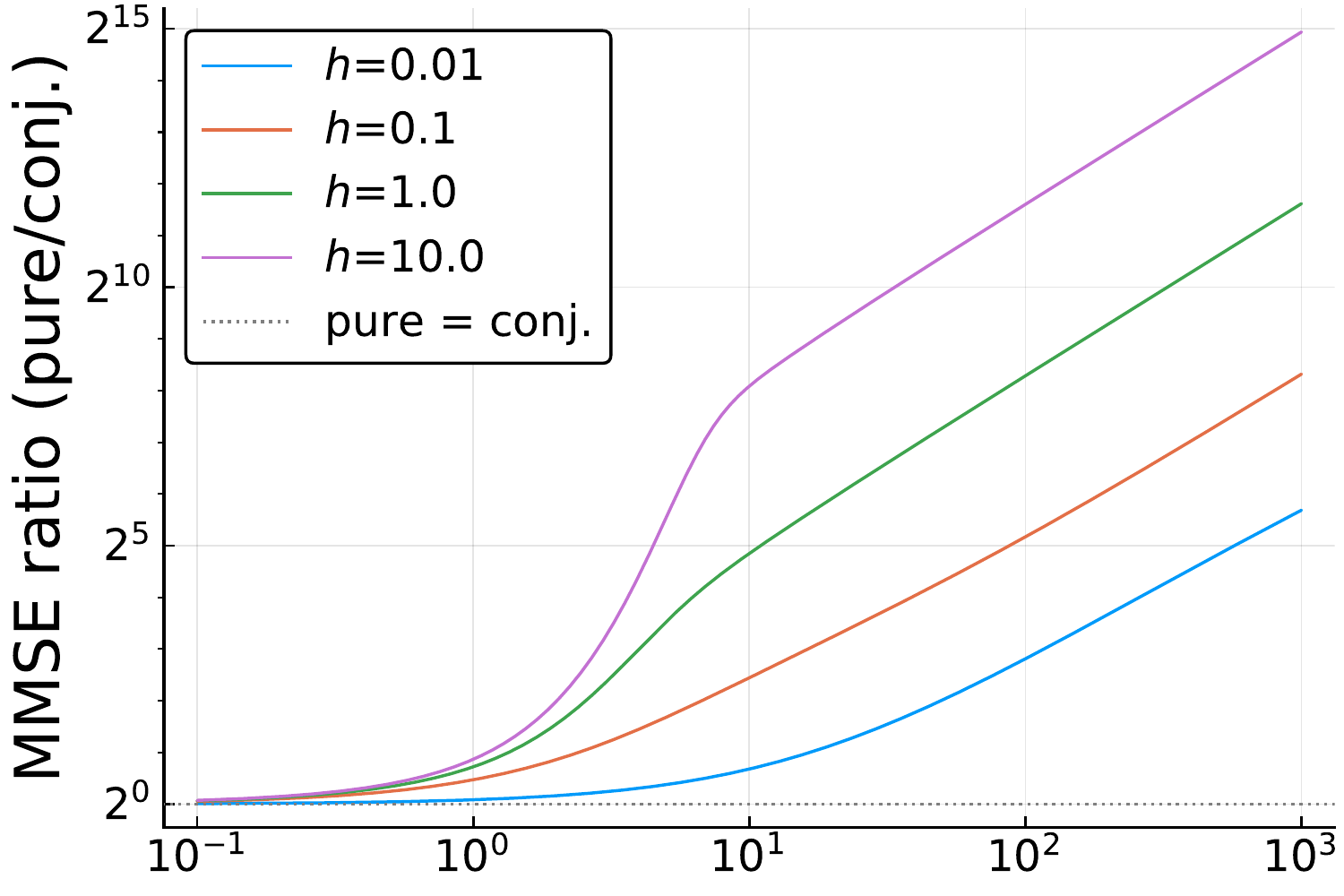} & \includegraphics[width=0.4\columnwidth]{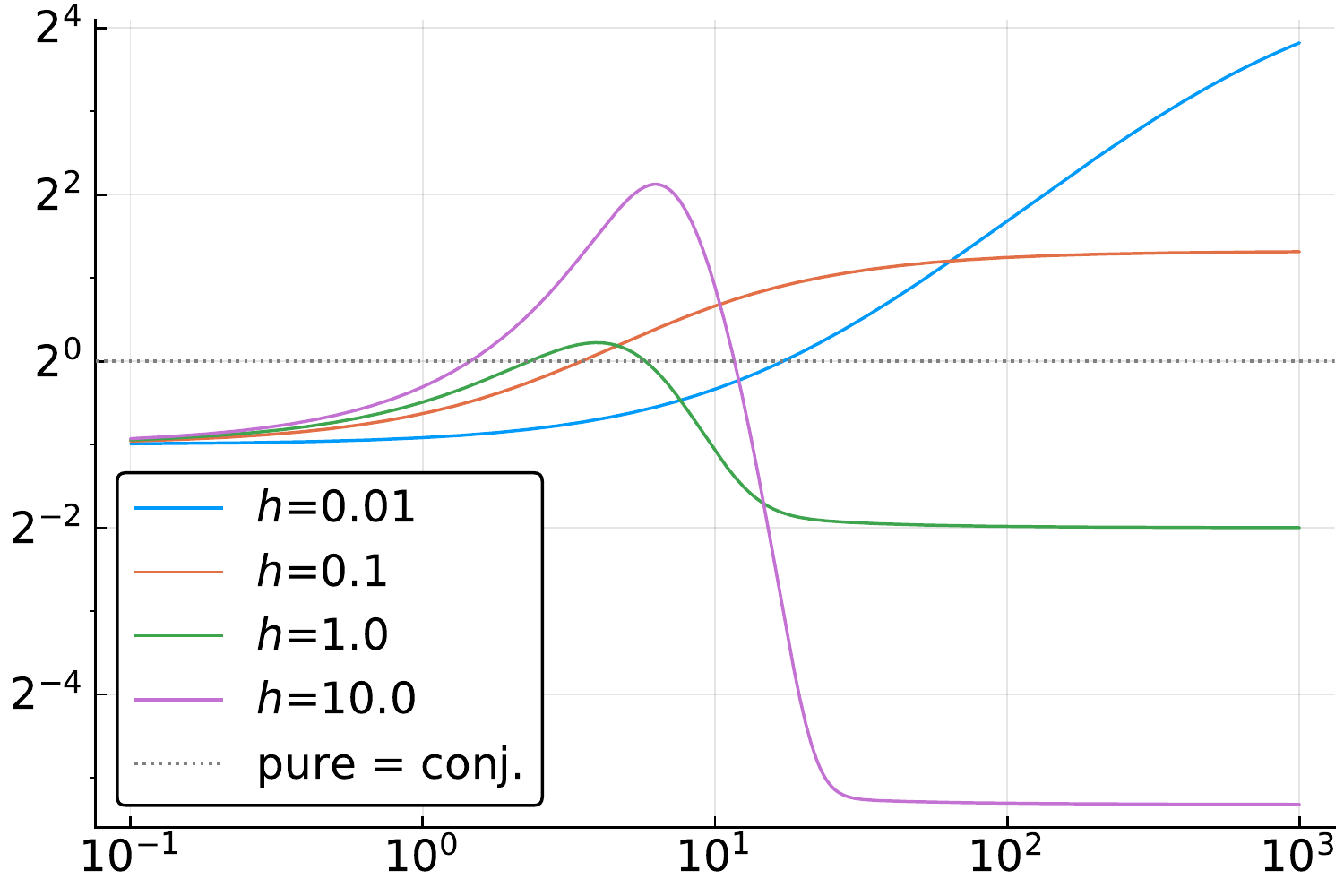}\tabularnewline
\begin{turn}{90}
1d, same width $\alpha$
\end{turn} & \includegraphics[width=0.4\columnwidth]{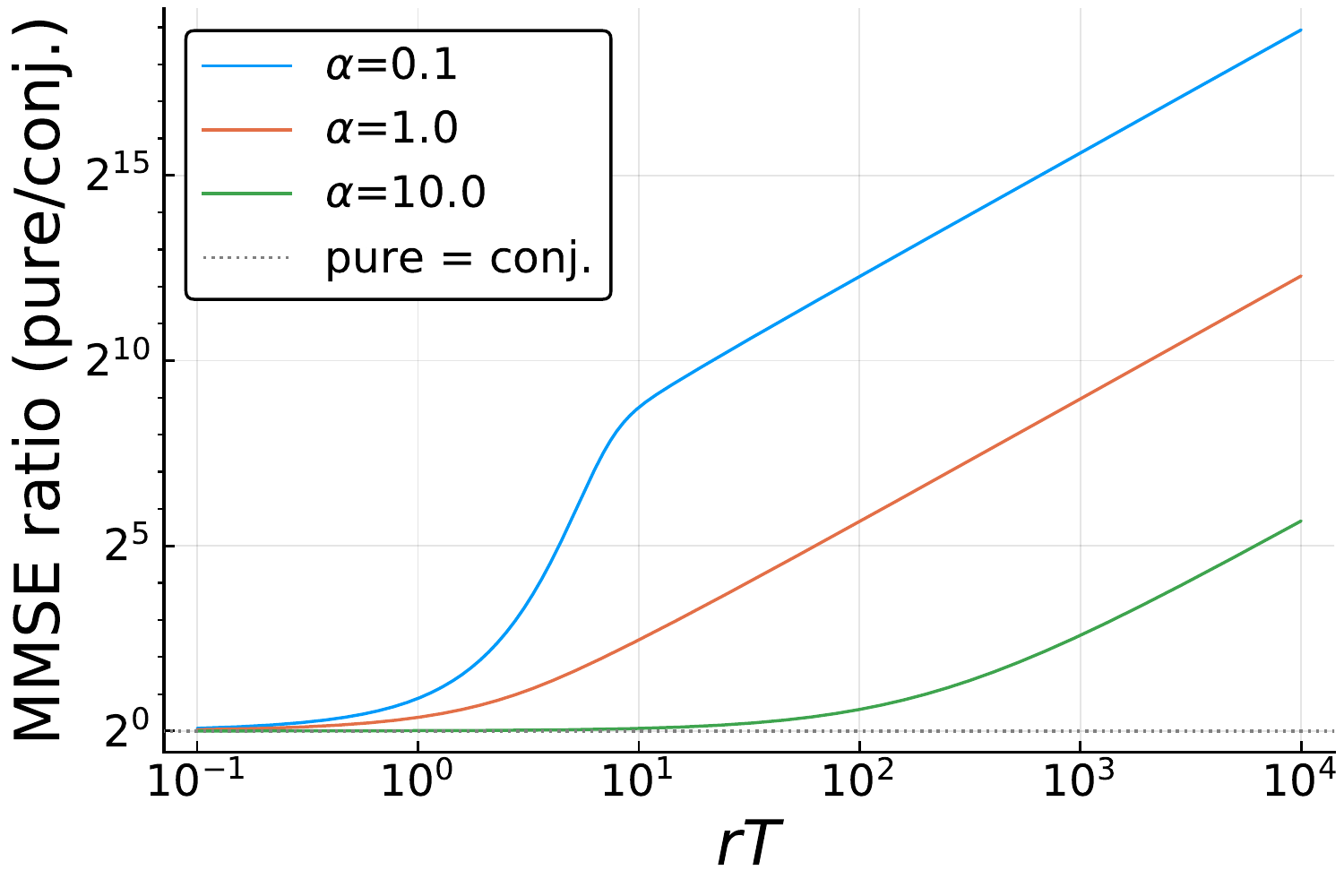} & \includegraphics[width=0.4\columnwidth]{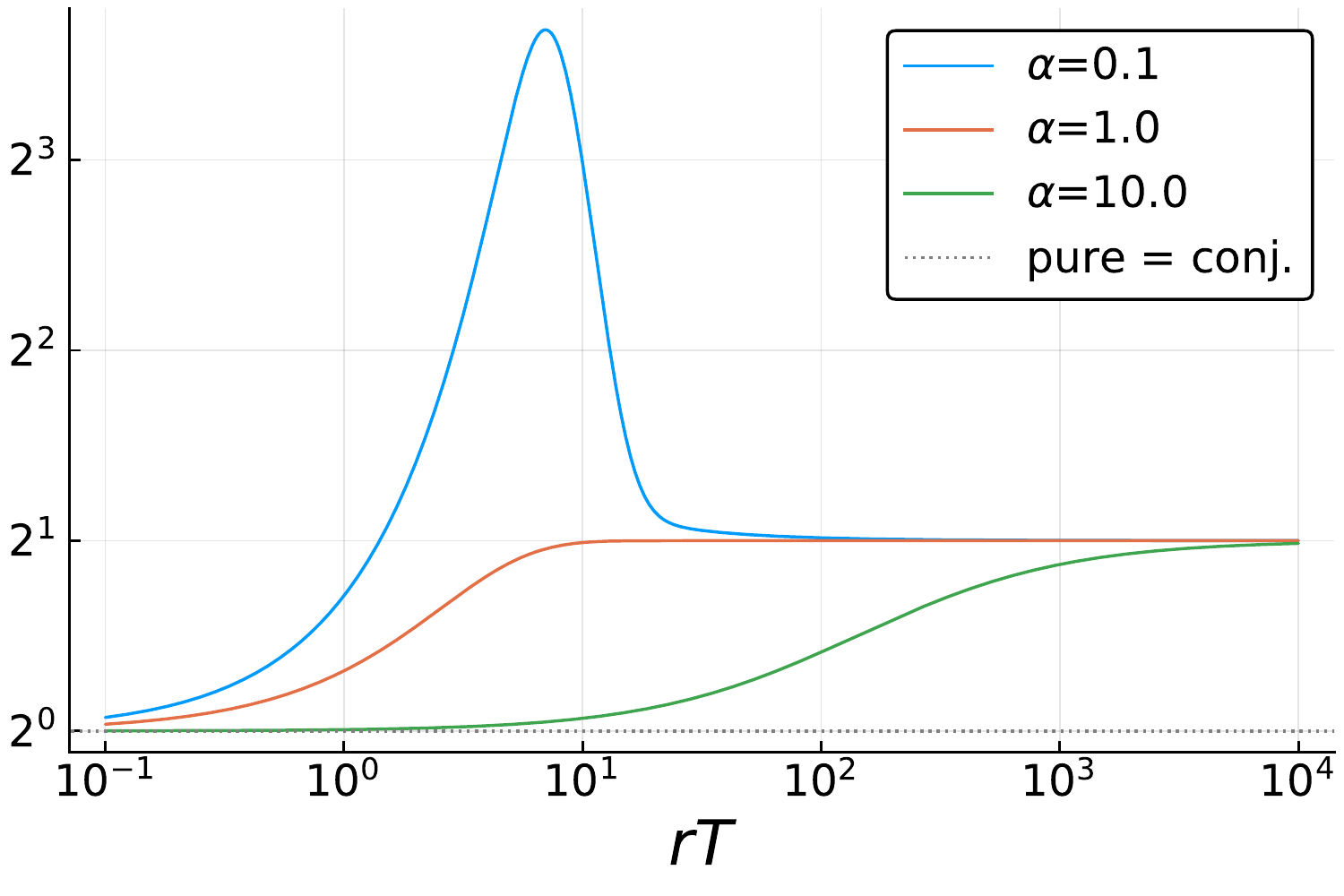}\tabularnewline
\end{tabular}
\par\end{centering}
\caption{MMSE ratio of pure and conjunctive coding for various models.\emph{
}See text for more details. In the first row, each pure population
is uniform in $\mathbb{R}^{2}$ with infinitely thin tuning. In the
second row, each pure population is uniform in $\mathbb{R}$ and rate-density
$h$ is the same for the pure and conjunctive codes. In the third
row, each pure population is uniform in $\mathbb{R}$ and tuning width
$\alpha$ is the same for the pure and conjunctive codes.}
\label{fig:pure-conj}
\end{figure}
\par\end{center}

To assess whether the different dependence on encoding time is related
to the use of ML estimation in \citep{Finkelstein2018}, we also evaluate
the ratio of ML MSE in pure and conjunctive coding for the models
involving 1d pure populations in Figure \ref{fig:pure-conj-ml}. The
model of a 2d pure population is omitted from this figure, since ML
estimation in this model has infinite variance (as seen by taking
the limit $\alpha_{1}\to0,\alpha_{2}\to\infty$ in (\ref{eq:ml-mse})
with $r\propto\alpha_{1}\alpha_{2}$ fixed). The results for ML estimation
are qualitatively similar to MMSE estimation in this symmetric case,
except in some cases where pure coding achieves lower ML MSE than
conjunctive coding, but worse MMSE (see, e.g., the bottom rows of
Figures \ref{fig:pure-conj} and \ref{fig:pure-conj-ml}). In particular,
pure coding remains preferable for short decoding time when using
ML estimation.
\begin{center}
\begin{figure}
\begin{centering}
\begin{tabular}{>{\centering}m{1cm}>{\centering}m{0.4\columnwidth}>{\centering}m{0.4\columnwidth}}
 & 1 pure population & 2 pure populations\tabularnewline
\begin{turn}{90}
1d, same rate-density $h$
\end{turn} & \includegraphics[width=0.4\columnwidth]{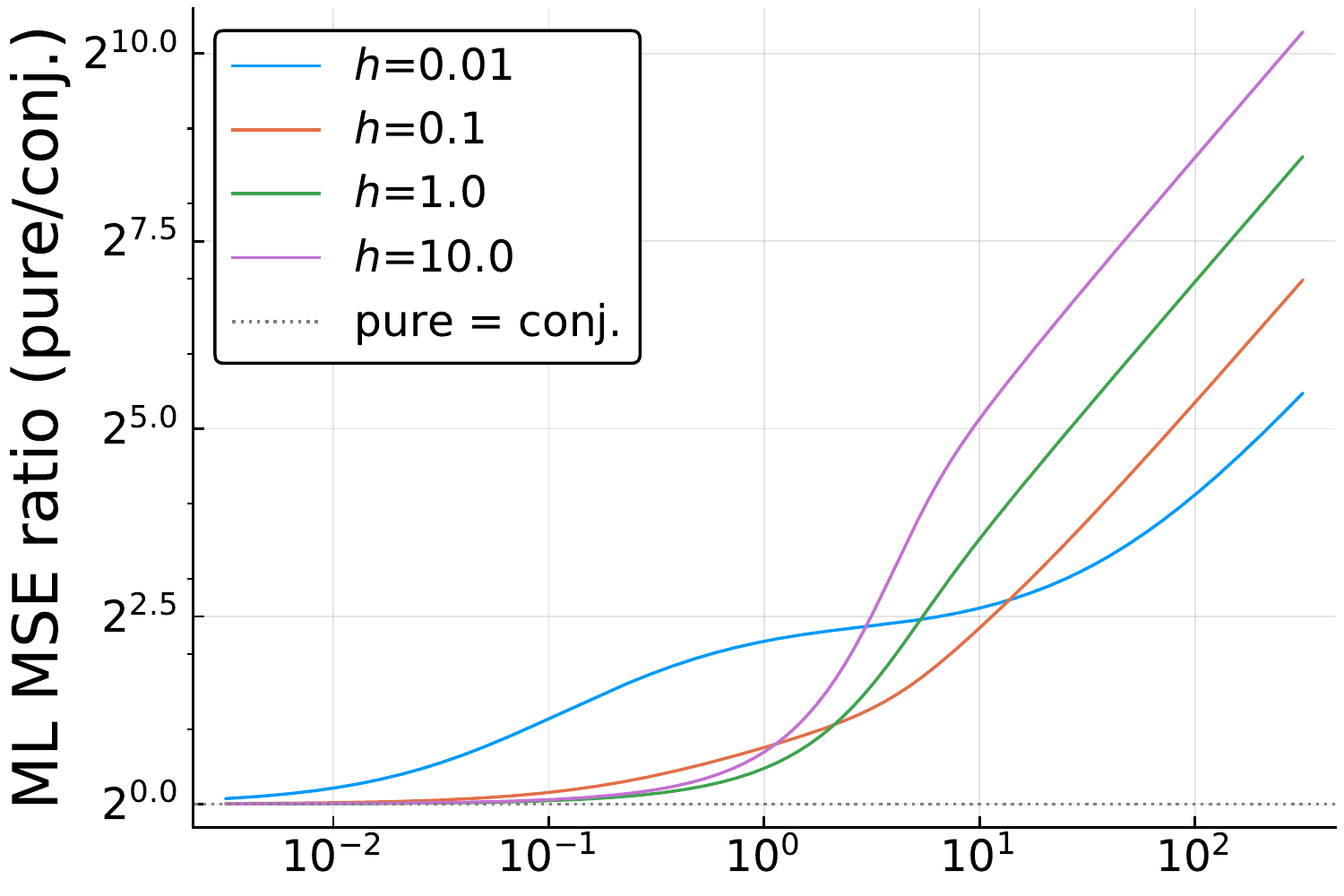} & \includegraphics[width=0.4\columnwidth]{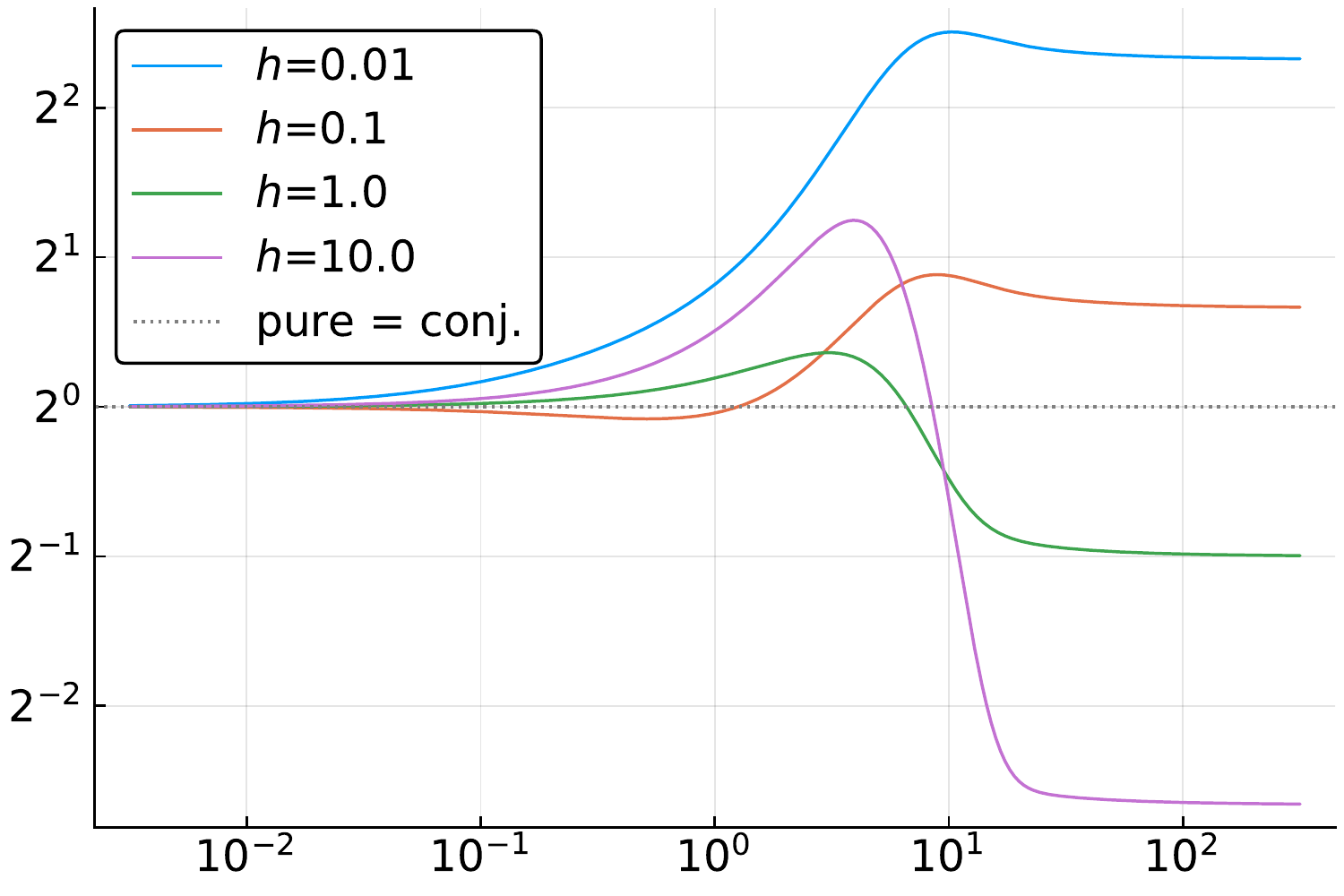}\tabularnewline
\begin{turn}{90}
1d, same width $\alpha$
\end{turn} & \includegraphics[width=0.4\columnwidth]{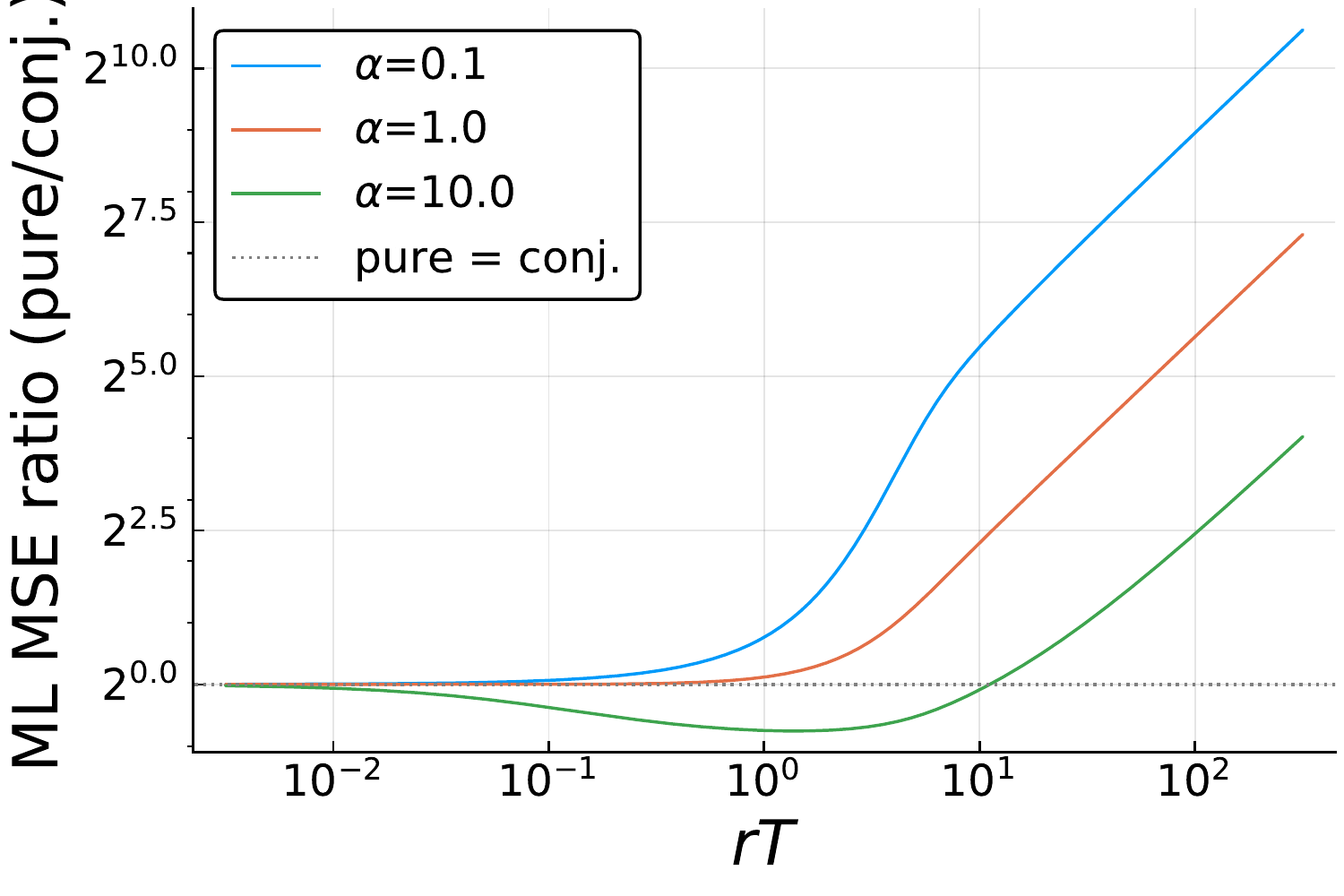} & \includegraphics[width=0.4\columnwidth]{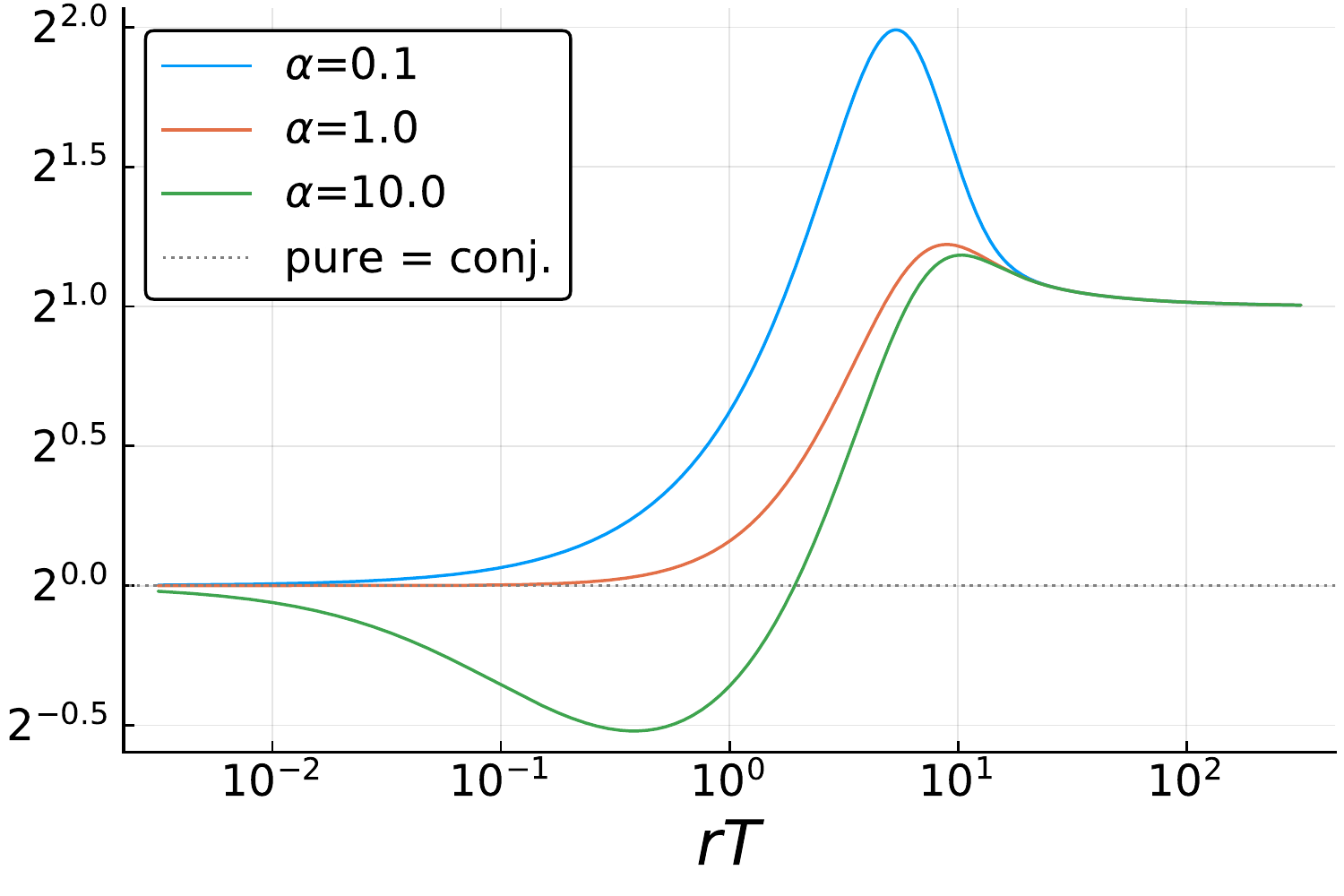}\tabularnewline
\end{tabular}
\par\end{centering}
\caption{ML MSE ratio of pure and conjunctive coding for various models.\emph{
}See text and Figure \ref{fig:pure-conj} for more details.}
\label{fig:pure-conj-ml}
\end{figure}
\par\end{center}

This analysis suggests that the different conclusion in \citep{Finkelstein2018}
is due to the cyclic nature of angle encoding. Experimentally observed
tuning curves in \citep{Finkelstein2018} are quite wide relative
to the range of possible angles, which might make a uniform population
over the unbounded space $\mathbb{R}^{m}$ an inadequate model. In
particular, the bounded space of angles allows for purely one-dimensional
encoding with non-zero tuning width and finite population rate. A
possible direction for future research is the analysis of MMSE-optimal
encoding in a uniform population encoding angles.
\end{document}